\def\mC{ \mathbb{C} } 
\def\cZ{ \mathcal{Z}}
\DeclareMathOperator{\Span}{Span}
\theoremstyle{definition}
\newtheorem{theorem}{Theorem}
\newtheorem{proposition}{Proposition}
\newtheorem{definition}{Definition}
\newtheorem{conjecture}{Conjecture}
\newtheorem{open}{Open Question}
\title{ \bf 
Row and column detection complexities of character tables 
}
\author{{\bf \small  Adrian Padellaro}\thanks{apadellaro@physik.uni-bielefeld.de}}
\affil{Faculty of Physics, Bielefeld~University, \authorcr PO-Box 100131, D-33501 Bielefeld, Germany}
\author{{\bf  \small Sanjaye Ramgoolam}\thanks{s.ramgoolam@qmul.ac.uk}}
\affil{Centre for Theoretical Physics, Department of Physics and Astronomy, Queen~Mary University of London, London E1 4NS, United Kingdom}
\author{{\bf \small  Rak-Kyeong Seong}\thanks{seong@unist.ac.kr}}
\affil{Department of Mathematical Sciences, and Department of Physics, Ulsan~National~Institute~of~Science~and~Technology, \authorcr 50 UNIST-gil, Ulsan 44919, South Korea}
\date{}
\begin{document} 
\maketitle
\thispagestyle{fancy}
\begin{abstract}
Character tables of finite groups and closely related commutative algebras have been investigated recently using new perspectives arising from the AdS/CFT correspondence and low-dimensional topological quantum field theories.  Two important elements in these new  perspectives are physically motivated definitions of quantum complexity for the algebras and a notion of row-column duality. These elements are encoded in properties of  the 
character table of a group $G$ and the associated algebras, notably the centre of the group algebra  and the fusion algebra of  irreducible representations of the group. 
Motivated by these developments, we define row and column versions of detection  complexities for character tables, and investigate the relation between these complexities under the exchange of rows and columns.
We observe regularities that arise in the statistical averages over small character tables and propose corresponding conjectures for arbitrarily large character tables.

\end{abstract}
\flushbottom
\pagestyle{plain}

\newpage

\tableofcontents

\section{Introduction}

The AdS/CFT correspondence \cite{Maldacena:1997re,witten,gkp} and topological quantum field theory  have been used to produce some new perspectives and investigations on centres of group algebras \cite{KempRam,JosephSRKronecker,JosephSRProj,ICFDTS,SRES,RowCol}. 
In the canonical example of AdS/CFT relating $U(N)$ super-Yang-Mills gauge theory in  four dimensions to string theory in $AdS_5 \times S^5$, an important role is played  by half-BPS operators in the SYM theory and their labelling by Young diagrams \cite{Corley:2001zk}. Young diagrams with $n$ boxes, and having columns of height no larger than $N$, are associated with half-BPS operators of dimension $n$. For $ n < N $, the cut-off on the column heights is immaterial and AdS/CFT leads to questions about centres of group algebras of symmetric groups $ \mC[S_n]$.

\vskip.2cm

The center of $\mathbb{C}[S_n]$, denoted $ \cZ (\mathbb{C}[S_n] ) $, has a distinguished basis labelled by conjugacy classes of $S_n$, and recently an integer $k_*(n)$ was identified as important for understanding the algebraic structure of the center \cite{KempRam}.
The integer $k_*(n)$ gives the size of a subset of conjugacy class labels that correspond to non-linearly generating basis elements of the center of the group algebra.
Subsets of non-linearly generating basis elements correspond to subsets of the columns of the character table, which can be used to distinguish the rows which are labelled by irreducible representations or equivalently Young diagrams in the case of $S_n$ \cite{KempRam,SRES}.
The cardinalities of these subsets give measures of complexity associated with quantum projector detection questions \cite{JosephSRProj} motivated by toy models of black hole information loss in AdS/CFT \cite{IIloss}. These minimal generating subsets are also used in construction algorithms for finite $N$ integer bases of multi-matrix invariants, which has applications to AdS/CFT \cite{PRS2024}.

\vskip.2cm 

Centres of finite group algebras also form the Hilbert spaces of Dijkgraaf-Witten {(DW)} two-dimensional topological field theories (TQFTs) \cite{Dijkgraaf:1989pz} associated with the group $G$.
The simplest form of these theories counts equivalence classes of principal $G$ bundles on two-dimensional surfaces. They have a simple topological lattice realisation \cite{FHK}, which uses a sum over group elements on edges of the lattice.
There is a delta-function weight associated with faces, which imposes the condition that the product of group elements around the face is the identity.

\vskip.2cm 

Observables in $S_n$ DW theory also appear in the large $N$ expansion of two-dimensional Yang-Mills theory  \cite{Gross:1993hu} and play an important role in the stringy interpretation of this large $N$ expansion \cite{DAdda:2002fcu}. 
Recent studies based on string theory ideas combined with DW theory gave physical constructions of representation theoretic numbers \cite{ICFDTS}.
This was extended to projective representations in \cite{SRES}, and several integrality results for partial sums of characters along columns of  character tables of finite groups were given. 
The integrality results were extended in \cite{RowCol} by introducing a row-column dual TQFT based on the fusion algebra $R(G)$ of a finite group $G$.

\vskip.2cm 

These different strands of development motivate a study of finite group character tables, from perspectives of complexity based on AdS/CFT and topological quantum field theories.
An important focus of the present study is the size of subsets of conjugacy classes, which non-linearly generate the centres of group algebras.
We initiate here an empirical study for general groups and seek minimum generating sets of conjugacy classes, searched among the full set of conjugacy classes.
The characters of these minimal sets serve to distinguish the complete set of irreducible representations.
This is a generalisation of the question considered
in \cite{KempRam} where the focus was on subsets consisting of conjugacy classes in symmetric groups where the permutations have a single non-trivial cycle and 
the remaining cycles are of length one.

\vskip.2cm 

Following the notion of row-column duality introduced in the context of TQFT constructions of integer partial sums of characters in \cite{PRS2024}, we introduce here a row-column dual question for minimal generating subsets of irreducible representations for the fusion algebras of finite groups.
We show, by extending the reasoning of \cite{KempRam,SRES}, that these correspond to minimal subsets of the rows of a given character table, which serve to distinguish all the columns.

\vskip.2cm

Our investigation is aligned with the recent theme in string theory research of looking at mathematically defined data of physical interest, using an
interdisciplinary exchange of ideas between mathematics, physics and machine learning \cite{He:2017aed,Krefl:2017yox, Ruehle:2017mzq, Carifio:2017bov, He:2017set, Jejjala:2019kio, Brodie:2019dfx, Ashmore:2019wzb, Gukov:2020qaj, Hashimoto:2018ftp, Cole:2018emh, Halverson:2020trp}. 
Data-driven machine learning algorithms are being harnessed in order to analyze vast datasets available in mathematics and relevant in string theory.
The use of machine learning is, for example, redefining our understanding of Calabi-Yau manifolds in string theory \cite{Bull:2018uow, Choi:2023rqg, Seong:2023njx},
leading to discoveries of new phenomena in number theory \cite{He:2022pqn}, and revealing new hidden structures in BPS spectra and corresponding 3-manifolds \cite{Gukov:2024opc}.
The idea that mathematical meaning is yet to be discovered in large datasets available to us in string theory, 
and the prospect of discovering new physical implications for string theory in the future through machine learning is extremely compelling and underscores the timeliness of our work.

\vskip.2cm 

The paper is organised as follows. The main sections are preceded by a page of nomenclature, detailing our notation. In Section \ref{sec: tqft}, we review the row-column duality for two-dimensional TQFTs based on finite groups.
The first Subsection \ref{sec: review tqft} gives a brief review of the most relevant aspects of the connection between two-dimensional topological quantum field theory and commutative Frobenius algebras.
We define the notion of a combinatorial basis for  the commutative Frobenius algebras of interest here. This is used to define a circle generator complexity and a circle-and-handle generator complexity for semi-simple Frobenius algebras having such combinatorial bases.
In Subsection \ref{sec: DW}, we review Dijkgraaf-Witten theories, which are based on the class algebra of a finite group $G$ and have a combinatorial basis labelled by conjugacy classes of $G$.
Subsection \ref{sec: class complexity} specialises the definitions of generator complexity to DW theories and introduces an average class size associated to the  circle generator complexity. 
Subsection \ref{sec: fusion} describes the row-column duals of Dijkgraaf-Witten theories, the so-called fusion TQFTs, which are based on the fusion algebra of a finite group and has a combinatorial basis labelled by irreducible representations of $G$.
Subsection \ref{sec: fusion complexity} specialises the definition of the generator complexities to Fusion TQFTs and introduces an average square dimension associated to the circle generator complexity. 
In section \ref{sec: stats}, we initiate the empirical study of the above generator complexities and associated averages. In particular, we study these features for  character tables up to size $n=30$. We make several observations about the statistical behaviour of these features and conjecture that some of the observations hold in general.

\newpage
\paragraph{Nomenclature}
\begin{align*}
G 
\quad &: \quad \text{a finite group}
\\
|G|
\quad &: \quad
\text{order of the group $G$}
\\
\mathrm{Cl}(G) 
\quad &: \quad
\text{set of conjugacy classes of $G$}
\\
|C|
\quad &: \quad
\text{size of the conjugacy class $C \in \mathrm{Cl}(G)$}
\\
\mathrm{Irr}(G)
\quad &: \quad
\text{set of  irreducible representations of $G$}
\\
d_R
\quad &: \quad
\text{dimension of the representation $R \in \mathrm{Irr}(G)$}
\\
\mathbb{C}[G]
\quad &: \quad
\text{group algebra of $G$ over the field of complex numbers $ \mathbb{C}$}
\\
\mathcal{Z}(\mathbb{C}[G])
\quad &: \quad
\parbox[t]{0.8\linewidth}{class algebra/center of group algebra of $G$, defined by multiplication of sums of group elements over conjugacy classes}
\\
R(G)
\quad &: \quad
\parbox[t]{0.8\linewidth}{fusion algebra of $G$, defined using the decomposition into irreducibles of the  tensor product of irreducible representations}
\\
\chi^R_C
\quad &: \quad
\parbox[t]{0.8\linewidth}{value of irreducible character $\chi^R$ on some $g \in C$ for a conjugacy class $C$}
\\
N_{\text{cls}}(G)
\quad &: \quad
\parbox[t]{0.8\linewidth}{circle generator complexity of class algebra of $G$}
\\
N^{\text{ch}}_{\text{cls}}(G)
\quad &: \quad
\parbox[t]{0.8\linewidth}{circle-and-handle generator complexity of class algebra of $G$}
\\
N_{\text{fus}}(G)
\quad &: \quad
\parbox[t]{0.8\linewidth}{circle generator complexity of fusion algebra of $G$}
\\
N^{\text{ch}}_{\text{fus}}(G)
\quad &: \quad
\parbox[t]{0.8\linewidth}{circle-and-handle generator complexity of fusion algebra of $G$}
\\
\mathcal{N}(G)
\quad &: \quad
\parbox[t]{0.8\linewidth}{average size of conjugacy classes of $G$, also average squared dimension of irreducible representations of $G$.}
\\
\mathcal{C}_{\text{gens}}(G)
\quad &: \quad
\text{average class size over generating subsets for class algebra of $G$}
\\
\mathcal{R}_{\text{gens}}(G)
\quad &: \quad
\text{average squared dimension over generating subsets for $R(G)$}
\end{align*}
\newpage

\section{Row-Column dual 2D TQFTs based on groups}
\label{sec: tqft}
In this section, we review some aspects of two-dimensional topological quantum field theory (2D TQFT) and their relation to commutative Frobenius algebras.
We also give some general theorems and definitions that are used in subsequent subsections where we focus on TQFTs based on finite groups.
This includes two definitions of complexity associated with TQFTs, which correspond to commutative semi-simple Frobenius algebras with combinatorial bases.
In particular, we review Dijkgraaf-Witten TQFTs and fusion TQFTs, which are based on the class algebra and fusion algebra of a finite group $G$, respectively.
They are used to define a row-column duality for the integrality properties of partial sums along rows and columns of the character table of $G$ \cite{RowCol}.
\\

\subsection{2D TQFTs, Frobenius algebras and complexity}
\label{sec: review tqft}

The Atiyah-Segal  \cite{Atiyah1988,Segal1999} axioms 
regard 2D TQFT as a functor from the category of two-dimensional oriented cobordisms (with objects given by disjoint unions of circles, and morphisms given by cobordisms between them) to the category of vector spaces.
An important result is that  a 2D TQFT is uniquely determined 
 by  a  finite-dimensional commutative Frobenius algebra (see \cite{JKock} for a review of the subject).

This subsection reviews the most relevant aspects of 2D TQFTs and commutative Frobenius algebras.
First, we give the definition of a commutative Frobenius algebra.
\begin{definition}[Commutative Frobenius algebra]
	A commutative Frobenius algebra $A$ over $\mathbb{C}$ is defined by
	\begin{enumerate}
		\item A finite-dimensional commutative and associative algebra $A$, which is
		a finite-dimensional vector space together with a commutative and associative product $\mu: A \otimes A \rightarrow A$.
		\item A unit $\eta: \mathbb{C} \rightarrow A$ such that $\mu(\eta(c),a) = \mu(a, \eta(c)) = ca$ for all $c \in \mathbb{C}$ and $a \in A$.
		\item A linear function $\varepsilon: A \rightarrow \mathbb{C}$, called the co-unit, with the property that the bilinear form $g = \varepsilon \circ \mu$ is non-degenerate.
		In other words, if $\{e_i\}_{i=1}^K$ is a basis for $A$, then the matrix
		\begin{equation}
			g_{ij} = \varepsilon(\mu(e_i,e_j)) \label{eq: frob form}
		\end{equation}
		is invertible.
	\end{enumerate}
	We often omit the symbol $\mu$ and write $\mu(a,b) = ab$ for the product of $a,b \in A$. 
\end{definition}

According to the Atiyah-Segal axioms, the above objects are associated with a set of surfaces that form the building blocks for general surfaces. 
The product is associated with the pair of pants surface, the unit is a cup and the co-unit is a cap, as illustrated below,
\begin{equation}
	\begin{aligned}
	&\mu: A \otimes A \rightarrow A &&\longleftrightarrow 
	&&\vcenter{
	\hbox{
	\begin{tikzpicture}[tqft/cobordism/.style={draw},
		tqft/view from=outgoing,
		tqft/boundary separation=20pt,
		tqft/cobordism height=25pt,
		tqft/circle x radius=5pt,
		tqft/circle y radius=2pt,
		tqft/every boundary component/.style={draw,rotate=0}]
   		\pic[tqft,
	   		incoming boundary components=2,
	   		outgoing boundary components=1,
			offset=.5,
	   		rotate=90,
			name=b,anchor={(0,0)}];
		\node at ([xshift=12pt,yshift=-10pt]b-incoming boundary 1) {$\mu$};
	\end{tikzpicture} 
	}
	}
	\\
	&\eta: \mathbb{C} \rightarrow A &&\longleftrightarrow 
	&&\vcenter{
	\hbox{
	\begin{tikzpicture}[tqft/cobordism/.style={draw},
		tqft/view from=outgoing,
		tqft/boundary separation=20pt,
		tqft/cobordism height=40pt,
		tqft/circle x radius=7pt,
		tqft/circle y radius=4pt,
		tqft/every boundary component/.style={draw,rotate=0}]
   		\pic[tqft,
	   		incoming boundary components=0,
	   		outgoing boundary components=1,
			offset=.5,
	   		rotate=90,
			name=b,anchor={(0,0)}];
		\node at ([xshift=35pt,yshift=-8pt]b-incoming boundary 1) {$\eta$};
	\end{tikzpicture} 
	}
	}
	\\
	&\varepsilon: A \rightarrow \mathbb{C} &&\longleftrightarrow 
	&&\vcenter{
	\hbox{
	\begin{tikzpicture}[tqft/cobordism/.style={draw},
		tqft/view from=outgoing,
		tqft/boundary separation=20pt,
		tqft/cobordism height=40pt,
		tqft/circle x radius=7pt,
		tqft/circle y radius=4pt,
		tqft/every boundary component/.style={draw,rotate=0}]
   		\pic[tqft,
	   		incoming boundary components=1,
	   		outgoing boundary components=0,
			offset=.5,
	   		rotate=90,
			name=b,anchor={(0,0)}];
		\node at ([xshift=4pt,yshift=-14pt]b-incoming boundary 1) {$\varepsilon$};
	\end{tikzpicture} 
	}
	}
	\end{aligned}
~.~
\end{equation}
It is useful to use labelled boundaries to refer to specific components of the above maps.
For example, the structure constants $f_{ij}^k$ defined by
\begin{equation}
	e_i e_j = \sum_k f_{ij}^k e_k~,~
\end{equation}
correspond to the labelled pair of pants
\begin{equation}
	f_{ij}^k \longleftrightarrow 
	\vcenter{
	\hbox{
	\begin{tikzpicture}[tqft/cobordism/.style={draw},
		tqft/view from=outgoing,
		tqft/boundary separation=20pt,
		tqft/cobordism height=25pt,
		tqft/circle x radius=5pt,
		tqft/circle y radius=2pt,
		tqft/every boundary component/.style={draw,rotate=0}]
   		\pic[tqft,
	   		incoming boundary components=2,
	   		outgoing boundary components=1,
			offset=.5,
	   		rotate=90,
			name=b,anchor={(0,0)}];
   		\node at ([xshift=10pt]b-outgoing boundary 1) {$k$};
   		\node at ([xshift=-10pt]b-incoming boundary 1) {$i$};
   		\node at ([xshift=-10pt]b-incoming boundary 2) {$j$};
	\end{tikzpicture} 
	}
	}
.~
\end{equation}

Let $\tilde{g}^{ij}$ be the inverse of the matrix $g_{ij}$ defined in equation \eqref{eq: frob form}.
It plays a special role in 2D TQFTs, since it can be used to define a so-called handle creation element given by
\begin{equation}
	h = \sum_{i,j} \tilde{g}^{ij} e_i e_j~.~ 
	\label{eq: handle creation}
\end{equation}
In graphical notation, this corresponds to a surface with a single handle and one boundary as illustrated by
\begin{equation}
	\vcenter{
	\hbox{
	\begin{tikzpicture}[tqft/cobordism/.style={draw},
		tqft/view from=outgoing,
		tqft/boundary separation=20pt,
		tqft/cobordism height=30pt,
		tqft/circle x radius=6pt,
		tqft/circle y radius=3pt,
		tqft/every boundary component/.style={draw,rotate=0}]
   		\pic[tqft,
	   		incoming boundary components=2,
	   		outgoing boundary components=1,
			offset=.5,
	   		rotate=90,name=b];
   		\pic[tqft,
	   		incoming boundary components=0,
	   		outgoing boundary components=2,
	   		rotate=90,name=a,anchor=outgoing boundary 1,at=(b-incoming boundary 1)];
	\end{tikzpicture} 
	}
	}
~.~
\end{equation}
For this reason, $h$ can be used to create algebraic quantities that correspond to higher genus surfaces.
\\

Let us now focus on semi-simple commutative Frobenius algebras.
The Artin-Wedderburn theorem \cite{Wedderburn,Artin} for semi-simple algebras implies that there exists a basis of idempotents/projectors $\{P_a\}_{a=1}^K$ satisfying
\begin{equation}
	P_a P_b = \delta_{ab} P_a \,, \quad \sum_{a=1}^K P_a = 1\, .
\end{equation}
From \cite[Chapter 1 Theorem 3.8]{Ram1991}, we have that
\begin{equation}
	P_a = n_a \sum_{i,j} \widehat{\chi}^a(e_i) \tilde{g}^{ij} e_j~,~ \label{eq: fourier inversion}
\end{equation}
where $\widehat{\chi}^a$ is an irreducible character of $A$ and the normalisation constant $n_a$ is determined by $P_a P_a = P_a$.
Concrete examples of these normalisation constants are given in \eqref{eq: idem class} and \eqref{eq: idem fus}.
We can now prove the following proposition.
\begin{proposition}
	Let $A$ be a semi-simple commutative Frobenius algebra with bases $\{e_i\}_{i=1}^K$ and $\{P_a\}_{a=1}^K$ satisfying
	\begin{equation}
		\begin{aligned}
			e_i e_j &= \sum_k f_{ij}^k e_k	\, ,\quad
			P_a P_b &= \delta_{ab} P_a \, ,
		\end{aligned}
	\end{equation}
	then
	\begin{equation}
		e_i P_a = \widehat{\chi}^a(e_i) P_a \, , \label{eq: idempot EV}
	\end{equation}
	where $\widehat{\chi}^a(e_i)$ is an irreducible character of $A$ evaluted on $e_i$.
\end{proposition}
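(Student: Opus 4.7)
The plan is to extract the eigenvalue via the idempotent basis and then identify it with the stated character. Since $A$ is commutative, left multiplication by $e_i$ is an algebra endomorphism, and each one-dimensional subspace $\mathbb{C} P_a$ is forced to be an eigenline by the orthogonality of the idempotents.

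First, I would expand $e_i$ in the idempotent basis, $e_i = \sum_a \lambda_{i,a} P_a$, which is possible because $\{P_a\}_{a=1}^K$ is a basis of $A$. Right-multiplying by $P_b$ and using $P_a P_b = \delta_{ab} P_a$ immediately gives
\begin{equation*}
e_i P_b = \sum_a \lambda_{i,a} P_a P_b = \lambda_{i,b} P_b,
\end{equation*}
so $P_b$ is an eigenvector of left multiplication by $e_i$ with eigenvalue $\lambda_{i,b}$. It only remains to identify $\lambda_{i,a}$ with the irreducible character value $\widehat{\chi}^a(e_i)$.

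To this end I would check that the linear functional $\widetilde{\chi}^a : A \to \mathbb{C}$ defined by $\widetilde{\chi}^a(e_i) = \lambda_{i,a}$ and extended linearly is an algebra homomorphism. For $x,y \in A$, two applications of the eigenvalue relation give
\begin{equation*}
\widetilde{\chi}^a(xy) P_a = (xy) P_a = x (y P_a) = \widetilde{\chi}^a(y)\, x P_a = \widetilde{\chi}^a(x)\widetilde{\chi}^a(y)\, P_a,
\end{equation*}
and since $P_a \neq 0$ this yields the multiplicativity $\widetilde{\chi}^a(xy) = \widetilde{\chi}^a(x)\widetilde{\chi}^a(y)$. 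The completeness relation $1 = \sum_b P_b$ together with the expansion gives $\widetilde{\chi}^a(1) = 1$. Hence $\widetilde{\chi}^a$ is a one-dimensional representation of $A$, which for a commutative semi-simple algebra is precisely the notion of irreducible character.

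The main obstacle is a bookkeeping one: matching the abstract characters $\widetilde{\chi}^a$ obtained from the idempotent decomposition with the specific $\widehat{\chi}^a$ whose normalisation is fixed by the Fourier-inversion formula \eqref{eq: fourier inversion}. I would resolve this by substituting that formula for $P_a$ into the eigenvalue equation $e_i P_a = \lambda_{i,a} P_a$, and using non-degeneracy of $g_{ij}$ together with the homomorphism property $\widehat{\chi}^a(e_i e_k) = \widehat{\chi}^a(e_i)\widehat{\chi}^a(e_k)$ to read off $\lambda_{i,a} = \widehat{\chi}^a(e_i)$. Equivalently, both $\widetilde{\chi}^a$ and $\widehat{\chi}^a$ satisfy the same defining eigenvalue relation with the same normalisation, so they must coincide.
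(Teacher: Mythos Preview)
Your proof is correct and takes a somewhat different route from the paper's. The paper plugs the Fourier-inversion expression \eqref{eq: fourier inversion} for $P_a$ directly into $e_i P_a$ and manipulates, using the invariance $g(e_i e_k, e_m) = g(e_i e_m, e_k)$ of the Frobenius form and the multiplicativity of $\widehat{\chi}^a$, to extract $\widehat{\chi}^a(e_i)$ as the eigenvalue in one self-contained computation. You instead first establish abstractly that each $P_a$ is an eigenvector of left multiplication by $e_i$ (via the expansion $e_i = \sum_a \lambda_{i,a} P_a$), then show the resulting functional $\widetilde{\chi}^a$ is a one-dimensional representation, and only at the end propose the Fourier-inversion substitution to match labels. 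Your first two steps are cleaner conceptually and make transparent \emph{why} the result holds (the Wedderburn blocks are one-dimensional, so they are forced to be eigenlines); but note that your final label-matching step, once written out, is essentially the paper's entire argument, so the extra structure does not shorten the proof. You could in fact bypass that last computation: since a commutative semi-simple algebra has exactly $K$ irreducible characters, each determined by $\widehat{\chi}^a(P_b) = \delta_{ab}$, and your $\widetilde{\chi}^a$ visibly satisfies the same relation, the identification $\widetilde{\chi}^a = \widehat{\chi}^a$ is immediate once one accepts that \eqref{eq: fourier inversion} uses this standard labelling.
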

\begin{proof}
	From equation \eqref{eq: fourier inversion} we have
	\begin{equation}
		\begin{aligned}
			e_i P_a
			&= n_a \sum_{j,k} \widehat{\chi}^a(e_j) \tilde{g}^{jk} e_i e_k \\
			&= n_a \sum_{j,k,l} \widehat{\chi}^a(e_j) \tilde{g}^{jk} f_{ik}^l e_l \\
			&= n_a \sum_{j,k,l,m} \widehat{\chi}^a(e_j) \tilde{g}^{jk} g(e_i e_k, e_m) \tilde{g}^{ml} e_l \\
			&= n_a \sum_{j,k,l,m} \widehat{\chi}^a(e_j) \tilde{g}^{jk} g(e_i e_m, e_k) \tilde{g}^{ml} e_l \\
			&= n_a \sum_{l,m} \widehat{\chi}^a(e_i e_m) \tilde{g}^{ml} e_l \\
			&= \widehat{\chi}^a(e_i)  P_a \, .
		\end{aligned}
	\end{equation}
	where in the third equality we used the fact that $\tilde{g}$ is the inverse of $g$, the fourth equality uses $g(e_i e_j, e_k) = \varepsilon(e_i e_j e_k) = \varepsilon(e_i e_k e_j) = g(e_i e_k, e_j)$, and the last equality follows because $A$ is commutative and therefore $\widehat{\chi}^a(e_ie_m)=\widehat{\chi}^a(e_i) \widehat{\chi}^a(e_m)$ for irreducible characters. 
\end{proof}
As a consequence we have the inverse change of basis
\begin{equation}
	e_i = e_i 1 = e_i \sum_{a=1}^K P_a = \sum_{a=1}^K \widehat{\chi}^a(e_i) P_a \, .
\end{equation}

The eigenequation \eqref{eq: idempot EV} plays an important role in this paper.
In particular, we are interested in subsets of elements $t_1, \dots, t_l \in A$ that are sufficient to distinguish all idempotents/projectors $P_a$.
A generalisation of the arguments in \cite[Section 3.4]{KempRam} and \cite[Section 3.1]{SRES} gives the following theorem (see Appendix \ref{apx: generators} for the proof),
\begin{theorem}
\label{thm: generators}
Let $A$ be a commutative semi-simple algebra over $\mathbb{C}$, and $t_1, \dots, t_l$ be a sequence of elements in $A$.
The following two statements are equivalent:
\begin{enumerate}
	\item The lists of eigenvalues $\left(\widehat{\chi}^a(t_1), \dots, \widehat{\chi}^a(t_l)\right)$ are sufficent to distinguish every projector $P_a$ from all other projectors.
	In other words,
	\begin{equation}
		(\widehat{\chi}^a(t_1), \dots, \widehat{\chi}^a(t_l)) \neq 
		(\widehat{\chi}^b(t_1), \dots, \widehat{\chi}^b(t_l)) \Leftrightarrow P_a \neq P_b.
	\end{equation}
	\item The elements $t_1, \dots, t_l$ multiplicatively generate $A$. That is, every element $t \in A$ can be written as
	\begin{equation}
		t = \sum_{n_1, \dots, n_l \geq 0} C_{n_1 \dots n_l} t_1^{n_1} \dots t_l^{n_l}
	\end{equation}
	for a finite set of non-zero complex numbers $C_{n_1 \dots n_l}$. 
\end{enumerate}
\end{theorem}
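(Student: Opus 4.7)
The plan is to exploit the fact that a commutative semi-simple algebra over $\mathbb{C}$ is, through its characters, isomorphic to $\mathbb{C}^K$ with componentwise product. Concretely, the map $\Phi: A \to \mathbb{C}^K$ defined by $\Phi(t) = (\widehat{\chi}^1(t), \dots, \widehat{\chi}^K(t))$ is an algebra isomorphism, under which $P_a$ is the standard basis vector $\delta_a$ and $t_j$ is the vector $\lambda_j = (\widehat{\chi}^1(t_j), \dots, \widehat{\chi}^K(t_j))$. A monomial $t_1^{n_1}\cdots t_l^{n_l}$ maps to the vector whose $a$-th entry is $\prod_j \widehat{\chi}^a(t_j)^{n_j}$. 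So statement (2) is equivalent to saying that the componentwise monomials built from the $\lambda_j$'s linearly span $\mathbb{C}^K$, while statement (1) is the assertion that the $K$ points $\Lambda_a = (\widehat{\chi}^a(t_1), \dots, \widehat{\chi}^a(t_l)) \in \mathbb{C}^l$ are pairwise distinct.

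For the direction (1) $\Rightarrow$ (2), I would construct, for each projector index $b$, an explicit polynomial in $l$ variables that evaluates to $1$ at $\Lambda_b$ and to $0$ at every other $\Lambda_a$. Since the $\Lambda_a$ are distinct, for each $a \neq b$ there exists an index $j(a,b) \in \{1,\dots,l\}$ with $\widehat{\chi}^a(t_{j(a,b)}) \neq \widehat{\chi}^b(t_{j(a,b)})$. Then the Lagrange-style product
\begin{equation*}
p_b(x_1,\dots,x_l) = \prod_{a \neq b} \frac{x_{j(a,b)} - \widehat{\chi}^a(t_{j(a,b)})}{\widehat{\chi}^b(t_{j(a,b)}) - \widehat{\chi}^a(t_{j(a,b)})}
\end{equation*}
has the desired interpolation properties. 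Evaluating at $t_1,\dots,t_l$ and pushing through $\Phi$ shows $p_b(t_1,\dots,t_l) = P_b$, so every projector is a polynomial in the $t_j$'s. Since $\{P_a\}$ is a basis of $A$, every element of $A$ is then a polynomial in $t_1,\dots,t_l$, which is statement (2).

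For the direction (2) $\Rightarrow$ (1), I would argue the contrapositive. Suppose $\Lambda_a = \Lambda_b$ for some $a \neq b$. Then for every monomial $t_1^{n_1}\cdots t_l^{n_l}$ the characters $\widehat{\chi}^a$ and $\widehat{\chi}^b$ agree, because characters of a commutative algebra are multiplicative. By linearity, $\widehat{\chi}^a$ and $\widehat{\chi}^b$ agree on every element of the subalgebra multiplicatively generated by $t_1,\dots,t_l$. But $\widehat{\chi}^a(P_a) = 1$ while $\widehat{\chi}^b(P_a) = 0$, so $P_a$ cannot lie in this subalgebra; hence $t_1,\dots,t_l$ do not generate $A$, contradicting (2).

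I expect the only delicate step is the construction of the polynomials $p_b$ in the first direction: one has to be careful that distinctness of the tuples $\Lambda_a$ is used index-by-index (the $j(a,b)$ assignment) rather than as a single-variable Lagrange interpolation, since a priori no single coordinate of the $\Lambda_a$ need separate them all. Once this combinatorial setup is in place, the rest is an application of the eigenequation \eqref{eq: idempot EV} and the isomorphism $A \cong \mathbb{C}^K$.
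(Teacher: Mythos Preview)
Your proposal is correct and covers both directions cleanly; in fact it is more complete than the paper's appendix, which only spells out (1)$\Rightarrow$(2) and leaves the converse implicit.

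The route for (1)$\Rightarrow$(2), however, is genuinely different. The paper does \emph{not} build a single Lagrange interpolant with a coordinate choice $j(a,b)$ for each pair. Instead it treats the generators one at a time: from $t_i$ alone it uses a Vandermonde inversion to produce, for each distinct eigenvalue $\xi_i$ of $t_i$, the coarse projector
\[
P(\xi_i)=\prod_{\xi\in\mathcal{S}(t_i)\setminus\{\xi_i\}}\frac{t_i-\xi}{\xi_i-\xi}
=\sum_{b:\,\widehat{\chi}^b(t_i)=\xi_i}P_b,
\]
and then shows that the product $\prod_{i=1}^l P(\xi_i)$ equals $\sum_{c\in\bigcap_i O(\xi_i)}P_c$. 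Under hypothesis (1) this intersection of level sets is a singleton, yielding each $P_b$ as a polynomial in $t_1,\dots,t_l$. So the paper's polynomial for $P_b$ factorises as a product of one-variable polynomials, one per generator, whereas yours mixes the variables according to the pairwise separating indices $j(a,b)$. Your construction is shorter and more elementary; the paper's has the structural advantage that it exhibits explicitly the subalgebra generated by each $t_i$ (spanned by the $P(\xi_i)$) and then obtains the full algebra as the product of these subalgebras, which matches the ``refinement by intersection of level sets'' picture emphasised elsewhere in the paper.
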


In our applications, the sequence of elements $t_1, \dots, t_l$ are basis elements of $A$.
We are particularly interested in Frobenius algebras with special distinguished bases, which we refer to as combinatorial bases.
\begin{definition}[Combinatorial basis]
Let $A$ be a commutative Frobenius algebra. A basis $\mathcal{B} = \{e_1, \dots, e_K\}$ for $A$ is called combinatorial if 
\begin{enumerate}
	\item The unit is contained in $\mathcal{B}$, which we choose to be $e_1$, without loss of generality.
	\item The structure constants $f_{ij}^k$ are non-negative integers.
	\item The co-unit has the following property
	\begin{equation}
		\varepsilon(e_i) = q \delta_{1i} \, ,
	\end{equation}
	where $q$ is a positive rational number.
\end{enumerate}
\end{definition}
The combinatorial bases, which we study in Sections \ref{sec: DW} and \ref{sec: fusion}, have the following additional property.
Associated with each basis element there is an integer $d(e_i)$ satisfying
\begin{equation}
	d(e_i) d(e_j) = \sum_k f_{ij}^k d(e_k) \, .
\end{equation}
Therefore, the set of basis elements $e_k$ that can appear in the product
\begin{equation}
	e_i e_j = \sum_{k} f_{ij}^k e_k
\end{equation}
with non-zero coefficient $f_{ij}^k$ have $d(e_k) \leq d(e_i) d(e_j)$.
This can be interpreted as a type of locality in the space of dimensions, for small dimension objects.
The above definition has interesting similarities with the definition of weak fusion rings (see \cite[Definition 3.1.3, Example 3.1.9(iii) and Section 3.8]{EGNO}).

Given a commutative semi-simple Frobenius algebra $A$ and a combinatorial basis $\mathcal{B}$, we define a measure of complexity $N(A, \mathcal{B})$, associated with the pair $(A,\mathcal{B})$.
The definition of $N(A, \mathcal{B})$ is inspired by the complexity studied in \cite{JosephSRProj}.
\begin{definition}[Circle generator complexity]
\label{def: rank complexity}
Let $\mathcal{B}$ be a combinatorial basis for the commutative semi-simple Frobenius algebra $A$.
The circle generator complexity of $(A, \mathcal{B})$, denoted $N(A,\mathcal{B})$ (or simply $N(A)$), is the smallest positive integer such that there exists at least one subset $\{t_1, \dots, t_{N(A)}\} \subseteq \mathcal{B}$ with the property that 
\begin{equation}
	t_1, \dots, t_{N(A)}
\end{equation}
multiplicatively generate $A$.
Equivalently, 
\begin{equation}
	(\widehat{\chi}^a(t_1), \widehat{\chi}^a(t_2), \dots, \widehat{\chi}^a(t_{N(A)}))
\end{equation}
distinguishes all projectors $P_a$.
\end{definition}
Since the handle creation element plays a crucial role in 2D TQFTs,
we also define a supplemented generator complexity.
\begin{definition}[Circle-and-handle generator complexity]
\label{def: rank handle complexity}
Let $\mathcal{B}$ be a combinatorial basis for the commutative semi-simple Frobenius algebra $A$ with handle creation element $h$.
The circle-and-handle generator complexity of $(A, \mathcal{B})$, denoted $N^{\text{ch}}(A,\mathcal{B})$ (or simply $N^{\text{ch}}(A)$), is determined by the smallest positive integer $l$ such that there exists at least one subset $\{t_1, \dots, t_l\} \subseteq \mathcal{B}$ with the property that 
\begin{equation}
	h, t_1, \dots, t_l
\end{equation}
multiplicatively generate $A$. We define $N^{\text{ch}}(A) = l+1$.
\end{definition}

We now consider 2D TQFTs directly related to  the representation theory of finite groups.
In order to study them, it is useful to introduce some notation that is used throughout the paper.
Given that $G$ is a finite group, there is a corresponding finite set of conjugacy classes $C_1,\dots,C_K$. We use $\mathrm{Cl}(G)$ to denote the set of all conjugacy classes,
\begin{equation}
	\mathrm{Cl}(G) = \{C_1, C_2, \dots, C_K\} \, .
\end{equation}
A finite group $G$ has a finite number of isomorphism classes of irreducible representations $R_1, \dots, R_K$. We use $\mathrm{Irr}(G)$ to denote the complete set of non-isomorphic irreducible representations,
\begin{equation}
	\mathrm{Irr}(G) = \{R_1, R_2, \dots, R_K\} \, .
\end{equation}
For every irreducible representation $R \in \mathrm{Irr}(G)$, there is a corresponding character $\chi^R(g)$. We use $d_R = \chi^R(1)$ for the dimension of the representation $R$.
Since characters are class functions, it is useful to introduce the notation $\chi^R_C$ for the value of $\chi^R(g)$ for any $g \in C$.
\\

\subsection{Dijkgraaf-Witten TQFT}
\label{sec: DW}
Dijkgraaf-Witten (DW) theory \cite{Dijkgraaf:1989pz}, which is a gauge theory with a finite gauge group $G$, is the prototypical physical example of a 2D TQFT.
 DW theory satisfies the Atiyah-Segal axioms and the corresponding commutative Frobenius algebra is the so-called class algebra of $G$.

In order to understand the class algebra of a finite group $G$, 
it is useful to first review the group algebra of $G$.
The group algebra $\mathbb{C}[G]$ of a finite group $G$ is a complex vector space of dimension $|G|$ and has elements
\begin{equation}
	a = \sum_{g \in G} a_g g, \quad a_g \in \mathbb{C}
\end{equation}
with group multiplication determined by
\begin{equation}
	ab = \sum_{g,h \in G} a_g b_h gh \, .
\end{equation}
If $G$ is non-abelian, $\mathbb{C}[G]$ is a non-commutative algebra and is therefore not the algebra 
that is used in the Atiyah-Segal axioms.

However, the group algebra contains a commutative subalgebra called the center $\mathcal{Z}(\mathbb{C}[G])$ or the class algebra.
It has a combinatorial basis labelled by conjugacy classes $C \in \mathrm{Cl}(G)$.
Specifically, we define the class sums
\begin{equation}
	T_C = \sum_{g \in C} g \, ,
\end{equation}
which form a combinatorial basis for the class algebra given by
\begin{equation}
	\mathcal{Z}(\mathbb{C}[G]) = \mathrm{Span}(T_C : C \in \mathrm{Cl}(G)) \, .
\end{equation}
The structure constants in this basis, which are defined by
\begin{equation}
	T_{C} T_D = \sum_{E \in \mathrm{Cl}(G)} f_{CD}^E T_E \, ,
\end{equation}
can be written in terms of characters as follows
\begin{equation}
	f_{CD}^E = \frac{|C||D|}{|G|} \sum_{R \in \mathrm{Irr}(G)} \frac{\chi^R_C \chi^R_D \overline{\chi}^R_E }{d_R} \, .
\end{equation}
The conjugacy class of the identity $C_0={\textrm{id}}$ is the unit
\begin{equation}
	\eta(c) = c T_{C_0}
\end{equation}
and the co-unit is given by
\begin{equation}
	\varepsilon(T_C) = \frac{1}{|G|} \delta_{C C_0} \, .
\end{equation}
The class algebra is semi-simple and the idempotent/projector basis elements can be written explicitly in terms of characters of $G$.
For an irreducible representation $R \in \mathrm{Irr}(G)$, we define
\begin{equation}
	P_R = \frac{d_R}{|G|} \sum_{C \in \mathrm{Cl}(G)} \overline \chi^R_C T_C \, , \label{eq: idem class}
\end{equation}
where they satisfy
\begin{equation}
	P_R P_S = \delta_{RS} P_R \label{eq: CL projectors} \, ,
\end{equation}
which can be checked using orthogonality of characters.

Specialising equation \eqref{eq: idempot EV} to the class algebra gives 
\begin{equation}
	T_C P_R = \widehat{\chi}^R(T_C) P_R \, , \label{eq: class EV}
\end{equation}
for every $C \in \mathrm{Cl}(G)$ and $R \in \mathrm{Irr}(G)$.
The eigenvalues $\widehat{\chi}^R(T_C)$, also known as central or normalized characters, can be written in terms of characters and conjugacy class sizes of $G$
\begin{equation}
	\widehat{\chi}^R(T_C) = \frac{|C|\chi^R_C}{d_R} \, .
\end{equation}
The Atiyah-Segal axioms associate the normalized characters with the following labelled cobordisms
\begin{equation}
	\widehat{\chi}^R(T_C) = 
	\vcenter{
	\hbox{
	\begin{tikzpicture}[tqft/cobordism/.style={draw},
		tqft/view from=outgoing,
		tqft/boundary separation=20pt,
		tqft/cobordism height=25pt,
		tqft/circle x radius=5pt,
		tqft/circle y radius=2pt,
		tqft/every boundary component/.style={draw,rotate=0}]
   		\pic[tqft,
	   		incoming boundary components=2,
	   		outgoing boundary components=1,
			offset=.5,
	   		rotate=90,name=b,anchor={(0,0)}];
   		\node at ([xshift=10pt]b-outgoing boundary 1) {$R$};
   		\node at ([xshift=-10pt]b-incoming boundary 1) {$C$};
   		\node at ([xshift=-10pt]b-incoming boundary 2) {$R$};
	\end{tikzpicture} 
	}
	}
	\, .
\end{equation}

In order to compute the handle creation element $H$ in a DW theory, we use the definition in \eqref{eq: handle creation}.
In the projector basis, the matrix of the bilinear form is diagonal and it is straightforward to compute the inverse (see \cite[Section 2.1]{RowCol}). We have 
\begin{equation}
	g(P_R, P_S) = \delta_{RS} \frac{d_R^2}{|G|^2} \quad \Rightarrow \quad \tilde{g}^{RS} = \delta^{RS} \frac{|G|^2}{d_R^2}\, .
\end{equation}
Therefore, the handle creation element is given by 
\begin{equation}
	H = \sum_{R,S} \tilde{g}^{RS} P_R P_S = \sum_{R \in \mathrm{Irr}(G)} \frac{|G|^2}{d_R^2} P_R \, . \label{eq: handle creation DW}
\end{equation}
Equation \eqref{eq: handle creation DW} can be understood geometrically as
\begin{equation}
	\frac{|G|^2}{d_R^2}  =
	\vcenter{
	\hbox{
	\begin{tikzpicture}[tqft/cobordism/.style={draw},
		tqft/view from=outgoing,
		tqft/boundary separation=20pt,
		tqft/cobordism height=30pt,
		tqft/circle x radius=6pt,
		tqft/circle y radius=3pt,
		tqft/every boundary component/.style={draw,rotate=0}]
   		\pic[tqft,
	   		incoming boundary components=2,
	   		outgoing boundary components=1,
			offset=.5,
	   		rotate=90,name=b];
   		\pic[tqft,
	   		incoming boundary components=0,
	   		outgoing boundary components=2,
	   		rotate=90,name=a,anchor=outgoing boundary 1,at=(b-incoming boundary 1)];
   		\node at ([xshift=10pt]b-outgoing boundary 1) {$R$};
	\end{tikzpicture} 
	}
	}
\, .
\end{equation}
From the projector properties \eqref{eq: CL projectors}, it follows that 
\begin{equation}
	H P_R = \frac{|G|^2}{d_R^2} P_R \, . \label{eq: DW handle EV}
\end{equation}

We summarise the most important data of a DW theory as follows:
\begin{equation*}
	\begin{aligned}
	&\mu: A \otimes A \rightarrow A &&\longleftrightarrow 
	&&\vcenter{
	\hbox{
	\vspace{-1em}
	\begin{tikzpicture}[tqft/cobordism/.style={draw},
		tqft/view from=outgoing,
		tqft/boundary separation=20pt,
		tqft/cobordism height=25pt,
		tqft/circle x radius=5pt,
		tqft/circle y radius=2pt,
		tqft/every boundary component/.style={draw,rotate=0}]
   		\pic[tqft,
	   		incoming boundary components=2,
	   		outgoing boundary components=1,
			offset=.5,
	   		rotate=90,
			name=b,
			anchor={(0,0)}];
		\node at ([xshift=12pt,yshift=-10pt]b-incoming boundary 1) {$\mu$};
	\end{tikzpicture} 
	}
	}
	\\
	&\mu(T_C,T_D) = \sum_{E\in \mathrm{Cl}(G)} f_{CD}^E T_E
	\\[2em]
	&\eta: \mathbb{C} \rightarrow A &&\longleftrightarrow 
	&&\vcenter{
	\hbox{
	\vspace{-1em}
	\begin{tikzpicture}[tqft/cobordism/.style={draw},
		tqft/view from=outgoing,
		tqft/boundary separation=20pt,
		tqft/cobordism height=40pt,
		tqft/circle x radius=7pt,
		tqft/circle y radius=4pt,
		tqft/every boundary component/.style={draw,rotate=0}]
   		\pic[tqft,
	   		incoming boundary components=0,
	   		outgoing boundary components=1,
			offset=.5,
	   		rotate=90,
			name=b,anchor={(0,0)}];
		\node at ([xshift=35pt,yshift=-8pt]b-incoming boundary 1) {$\eta$};
	\end{tikzpicture} 
	}
	}
	\\
	&\eta(1) = T_{C_0}
	\\[2em]
	&\varepsilon: A \rightarrow \mathbb{C} &&\longleftrightarrow 
	&&\vcenter{
	\hbox{
	\vspace{-1em}
	\begin{tikzpicture}[tqft/cobordism/.style={draw},
		tqft/view from=outgoing,
		tqft/boundary separation=20pt,
		tqft/cobordism height=40pt,
		tqft/circle x radius=7pt,
		tqft/circle y radius=4pt,
		tqft/every boundary component/.style={draw,rotate=0}]
   		\pic[tqft,
	   		incoming boundary components=1,
	   		outgoing boundary components=0,
			offset=.5,
	   		rotate=90,
			name=b,anchor={(0,0)}];
		\node at ([xshift=4pt,yshift=-14pt]b-incoming boundary 1) {$\varepsilon$};
	\end{tikzpicture} 
	}
	}
	\\
	&\varepsilon(T_C) = \frac{\delta_{C {C_0}}}{|G|}
	\\[2em]
	&H = \sum_{R \in \mathrm{Irr}(G)} \frac{|G|^2}{d_R^2} P_R &&\longleftrightarrow 
	&&\vcenter{
	\hbox{
	\begin{tikzpicture}[tqft/cobordism/.style={draw},
		tqft/view from=outgoing,
		tqft/boundary separation=20pt,
		tqft/cobordism height=30pt,
		tqft/circle x radius=6pt,
		tqft/circle y radius=3pt,
		tqft/every boundary component/.style={draw,rotate=0}]
   		\pic[tqft,
	   		incoming boundary components=2,
	   		outgoing boundary components=1,
			offset=.5,
	   		rotate=90,name=b];
   		\pic[tqft,
	   		incoming boundary components=0,
	   		outgoing boundary components=2,
	   		rotate=90,name=a,anchor=outgoing boundary 1,at=(b-incoming boundary 1)];
	\end{tikzpicture} 
	}
	}
	\end{aligned}
\end{equation*}
\\

\subsection{Measures of complexity in DW TQFTs}
\label{sec: class complexity}
In \cite{JosephSRProj}, the complexity of quantum projector detection algorithms was studied for projectors in $A = \mathcal{Z}(\mathbb{C}[S_n])$.
The detection algorithm is based on a quantum phase estimation subroutine (see \cite[Section 5.2]{nielsen2010quantum}).
In particular, given a generating set $\{T_{C_i}\}_{i=1}^{l}$ of $\mathcal{Z}(\mathbb{C}[S_n])$ it is necessary to apply the subroutine for each element of the generating set, leading to a $l$-dependence of the complexity.

In a parallel development \cite{PRS2024}, generalisations of eigenequations like \eqref{eq: class EV} have proven to be important in  classical algorithms for the construction of bases for multi-matrix invariants, which have applications in AdS/CFT.
There, the conjugacy class sizes $|C|$ associated with the combinatorial basis elements $T_C$ played a role in the complexity of the classical algorithms.
We now introduce these measures of complexity for general groups.

Since the class algebra has combinatorial basis elements $T_C$ and the eigenvalues in \eqref{eq: class EV} are given by central characters, applying Theorem \ref{thm: generators} and \eqref{eq: class EV} gives the following.
\begin{proposition}
	\label{prop: class algebra generating set}
	Let $\{D_1, \dots, D_l\} \subseteq \mathrm{Cl}(G)$ and let $\{T_{D_1}, \dots, T_{D_l}\}$ be the corresponding combinatorial basis elements.
	They multiplicatively generate the class algebra $\mathcal{Z}(\mathbb{C}[G])$ if the list
	\begin{equation}
		\left(\frac{|D_1| \chi^R_{D_1}}{d_R}, \dots, \frac{|D_l| \chi^R_{D_l} }{d_R}\right)
	\end{equation}
	distinguishes all irreducible representations $R \in \mathrm{Irr}(G)$.
\end{proposition}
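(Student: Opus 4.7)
The plan is to specialize Theorem \ref{thm: generators} to the class algebra $A = \mathcal{Z}(\mathbb{C}[G])$ equipped with its combinatorial basis $\{T_C\}_{C \in \mathrm{Cl}(G)}$. First I would verify that the hypotheses of Theorem \ref{thm: generators} apply: the class algebra is commutative by construction as the center of $\mathbb{C}[G]$, and it is semi-simple, as witnessed by the explicit orthogonal idempotent basis $\{P_R\}_{R \in \mathrm{Irr}(G)}$ given in \eqref{eq: idem class} satisfying \eqref{eq: CL projectors}. Since the projectors are indexed by irreducible representations, distinguishing projectors $P_R$ is the same as distinguishing elements $R \in \mathrm{Irr}(G)$.

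Next I would invoke the eigenvalue equation \eqref{eq: class EV}, which identifies the abstract irreducible character $\widehat{\chi}^R$ of the algebra $A$ evaluated on the combinatorial basis element $T_C$ with the central character
\begin{equation*}
	\widehat{\chi}^R(T_C) = \frac{|C| \chi^R_C}{d_R}.
\end{equation*}
Substituting $T_C \to T_{D_i}$ turns the abstract tuple $\bigl(\widehat{\chi}^a(t_1), \ldots, \widehat{\chi}^a(t_l)\bigr)$ appearing in Theorem \ref{thm: generators} into exactly the tuple
\begin{equation*}
	\left(\frac{|D_1| \chi^R_{D_1}}{d_R}, \ldots, \frac{|D_l| \chi^R_{D_l}}{d_R}\right)
\end{equation*}
appearing in the proposition.

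The proposition then follows immediately from the direction $(1) \Rightarrow (2)$ of Theorem \ref{thm: generators}: if this tuple of central characters distinguishes every projector $P_R$, then $T_{D_1}, \ldots, T_{D_l}$ multiplicatively generate $\mathcal{Z}(\mathbb{C}[G])$. There is no serious obstacle here, since the content of the proposition is essentially a dictionary translation of Theorem \ref{thm: generators} into the explicit language of central characters; all the real work has been carried out upstream in establishing Theorem \ref{thm: generators} and in computing the eigenvalues \eqref{eq: class EV}. The only mild care required is to note that the converse direction also holds, since both implications of Theorem \ref{thm: generators} apply, so in fact the stated sufficient condition is also necessary, although the proposition only records the ``if'' direction.
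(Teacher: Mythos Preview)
Your proposal is correct and matches the paper's approach exactly: the paper simply states that the proposition follows from applying Theorem \ref{thm: generators} together with the eigenvalue equation \eqref{eq: class EV}, which is precisely the specialization you carry out. Your write-up is in fact more detailed than the paper's one-sentence justification, and your closing remark about the converse direction is a correct observation that the paper does not bother to record.
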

It is also useful to have the analogous result for a generating sets of combinatorial basis elements alongside the handle creation element $H$.
\begin{proposition}
	\label{prop: class algebra generating set handle}
	Let $\{D_1, \dots, D_l\} \subseteq \mathrm{Cl}(G)$, $\{T_{D_1}, \dots, T_{D_l}\}$ be the corresponding combinatorial basis elements, and $H$ be the handle creation operator \eqref{eq: handle creation DW}.
	The set $\{H, T_{D_1}, \dots, T_{D_l}\}$ multiplicatively generates the class algebra $\mathcal{Z}(\mathbb{C}[G])$ if the list
	\begin{equation}
    	\left(\frac{|G|^2}{d_R^2}, \frac{|D_1| \chi^R_{D_1}}{d_R}, \dots, \frac{|D_l| \chi^R_{D_l} }{d_R}\right)
	\end{equation}
	distinguishes all irreducible representations $R \in \mathrm{Irr}(G)$.
\end{proposition}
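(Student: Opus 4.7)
The plan is to reduce this directly to Theorem \ref{thm: generators}, since the class algebra $\mathcal{Z}(\mathbb{C}[G])$ is commutative and semi-simple (so the theorem applies), and the primitive idempotents are precisely the projectors $P_R$ indexed by $R \in \mathrm{Irr}(G)$. The strategy is to identify the eigenvalues of each of the proposed generators $H, T_{D_1}, \dots, T_{D_l}$ on these projectors, and then verify that the hypothesis of the proposition is literally the condition ``the eigenvalue tuples distinguish the projectors'' appearing in Theorem \ref{thm: generators}.

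The first step is to recall from equation \eqref{eq: class EV} together with the explicit formula for central characters that
\begin{equation}
    T_{D_i} P_R = \frac{|D_i| \chi^R_{D_i}}{d_R} P_R \, ,
\end{equation}
so that $\widehat{\chi}^R(T_{D_i}) = |D_i|\chi^R_{D_i}/d_R$. The second step is to recall equation \eqref{eq: DW handle EV}, which gives
\begin{equation}
    H P_R = \frac{|G|^2}{d_R^2} P_R \, ,
\end{equation}
so the handle creation element acts as a scalar on each projector as well. Consequently the full eigenvalue list on $P_R$ of the candidate generating set $\{H, T_{D_1}, \dots, T_{D_l}\}$ is precisely the tuple displayed in the statement of the proposition.

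The third step is to invoke Theorem \ref{thm: generators}, which states that a collection of elements multiplicatively generates $A$ if and only if the associated tuples of eigenvalues distinguish every projector. Applied to $A = \mathcal{Z}(\mathbb{C}[G])$ with the chosen generators, the hypothesis of the proposition (that the tuples distinguish all $R \in \mathrm{Irr}(G)$) is exactly the condition that distinguishes the idempotents $P_R$, and so the theorem yields the desired conclusion.

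I do not expect any real obstacle here; the content of the argument is packaged entirely in Theorem \ref{thm: generators} and in the previously derived eigenvalue formulas \eqref{eq: class EV} and \eqref{eq: DW handle EV}. The only thing that requires any care is the observation that $H$, although not itself an element of the combinatorial basis $\{T_C\}_{C \in \mathrm{Cl}(G)}$, lies in $\mathcal{Z}(\mathbb{C}[G])$ and acts diagonally in the projector basis, so it plays the role of an additional ``$t_i$'' in Theorem \ref{thm: generators} without modification of the theorem's statement. The proof thus parallels that of Proposition \ref{prop: class algebra generating set}, with the extra generator $H$ simply extending the eigenvalue tuple by the entry $|G|^2/d_R^2$.
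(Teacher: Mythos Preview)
Your proof is correct and follows exactly the paper's own approach: the paper's proof is the one-line observation that the result follows from Theorem \ref{thm: generators} together with the eigenvalue equation \eqref{eq: DW handle EV}, and your argument is simply a more explicit unpacking of that same reasoning.
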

\begin{proof}
	This follows from Theorem \ref{thm: generators} and the eigenvalue equation in \eqref{eq: DW handle EV}.
\end{proof}

Studying the circle-and-handle complexity is directly related to studying the character table of $G$.
In particular, the list
\begin{equation}
	\left(\frac{|G|^2}{d_R^2}, \frac{|D_1| \chi^R_{D_1}}{d_R}, \dots, \frac{|D_l| \chi^R_{D_l} }{d_R}\right)
\end{equation}
distinguishes all projectors $P_R$ if and only if the list
\begin{equation}
	(d_R, \chi^R_{D_1}, \dots, \chi^R_{D_l}) \label{eq: circle handle list}
\end{equation}
distinguishes all projectors $P_R$.
To see this, let $R,S \in \mathrm{Irr}(G)$ be distinguishable by the first list but not the second.
That is, 
\begin{equation}
	\frac{|G|^2}{d_R^2} \neq \frac{|G|^2}{d_S^2} \quad \text{ or \quad $\exists j \in \{1,\dots,l\}$ such that } \quad \frac{|D_j|\chi^R_{D_j}}{d_R} \neq \frac{|D_j|\chi^S_{D_j}}{d_S}
\end{equation}
but
\begin{equation}
	d_R = d_S, \quad \chi^R_{D_1} = \chi^S_{D_1} \quad ,\dots, \quad \chi^R_{D_l} = \chi^S_{D_l} \, .
\end{equation}
This is a contradiction since it implies
\begin{equation}
		\frac{|G|^2}{d_R^2} = \frac{|G|^2}{d_S^2} \, ,\quad \frac{|D_j|\chi^R_{D_j}}{d_R} = \frac{|D_j|\chi^S_{D_j}}{d_S} \, .
\end{equation}
The other direction follows from a similar argument by contradiction.

For the sake of readability, we introduce the following short-hand notation for the generator complexities in Definition \ref{def: rank complexity} and \ref{def: rank handle complexity} specialised to $A=\mathcal{Z}(\mathbb{C}[G])$,
\newcommand*{\brankC}{N_{\text{cls}}}
\newcommand*{\bhrankC}{N^{\text{ch}}_{\text{cls}}}
\begin{equation}
	\begin{aligned}
		\brankC(G)&\equiv N(\mathcal{Z}(\mathbb{C}[G])) 	\, ,\\
		\bhrankC(G) &\equiv N^{\text{ch}}(\mathcal{Z}(\mathbb{C}[G])) 	\, .
	\end{aligned}
\label{eq: complexity boundary class}
\end{equation}
These generator complexities correspond to the minimal values of $l$ for which we can apply Proposition \ref{prop: class algebra generating set} and \ref{prop: class algebra generating set handle}.

We also study the average class size over the minimal generating subsets, which is defined as follows.
\begin{definition}[Average class size over  minimal generating subsets]
Consider a class algebra $\mathcal{Z}(\mathbb{C}[G])$ with combinatorial basis $\mathcal{B} = \{T_C \, : \, C \in \mathrm{Cl}(G)\}$ and circle generator complexity $\brankC(G)$.
In general, there can be several distinct choices of minimal generating subsets $F \subseteq \mathcal{B}$ with $|F| = \brankC(G)$.
Let $k$ be the total number of minimal generating subsets and
\begin{equation}
	F_1, \dots, F_k \subseteq \mathcal{B} 
\end{equation}
be the corresponding subsets.
We define the average class size over the minimal  generating subsets as
\begin{equation}
	\mathcal{C}_{\text{gens}}(G) = \frac{1}{k\brankC(G)} \sum_{i=1}^k \sum_{T_C \in F_i} |C| \, . \label{eq: avg gens class size}
\end{equation}
\end{definition}

\subsection{Fusion TQFT}
\label{sec: fusion}
The fusion TQFT is based on the fusion algebra of a finite group.
This TQFT was understood to be row-column dual to the DW theory in \cite{RowCol}.
We briefly review the construction of fusion TQFTs.

Let $R,S,T \in \mathrm{Irr}(G)$, and
\begin{equation}
	N_{RS}^T = \frac{1}{|G|} \sum_{g \in G} \chi^R(g) \chi^S(g) \chi^T(g^{-1}) \, ,
\end{equation}
be the multiplicity of the irreducible representation $T$ in the decomposition of the tensor product $R \otimes S$.
The fusion algebra has a combinatorial basis given by
\begin{equation}
	R(G) = \mathrm{Span}( \, a_R ~ :~ R \in \mathrm{Irr}(G) \, )
\end{equation}
with structure constants
\begin{equation}
	a_R a_S = \sum_{T \in \mathrm{Irr}(G)} N^T_{RS} a_T \, . 
\end{equation}
The unit corresponds to the trivial representation $R_0$ of $G$,
\begin{equation}
	\eta(c) = c a_{R_0}
\end{equation}
and the co-unit is given by
\begin{equation}
	\varepsilon(a_R) = \delta_{R R_0}\, .
\end{equation}
This commutative algebra also has a basis of projectors.
For every $C \in \mathrm{Cl}(G)$, we define
\begin{equation}
	A_C = \frac{|C|}{|G|} \sum_{R \in \mathrm{Irr}(G)} \overline{\chi}^R_C a_R\, . \label{eq: idem fus}
\end{equation}
They satisfy (see \cite{RowCol} for a proof)
\begin{equation}
	A_C A_D = \delta_{CD} A_C \quad \forall C,D \in \mathrm{Cl}(G) \, .\label{eq: fusion idem}
\end{equation}
The two bases are connected by a set of eigenequations
\begin{equation}
	a_R A_C = \chi^R_C A_C \, . \label{eq: fusion EV}
\end{equation}
The handle creation element, which we denote by $H$, is computed by using Definition \eqref{eq: handle creation} in the projector basis.
From \cite[Section 2.3]{RowCol} we have
\begin{equation}
	g(A_C, A_D) = \delta_{CD} \frac{|C|}{|G|} \quad \Rightarrow \quad \tilde{g}^{CD} = \delta^{CD} \frac{|G|}{|C|}\, .
\end{equation}
Therefore the handle creation element in the fusion algebra takes the form,
\begin{equation}
	H = \sum_{C \in \mathrm{Cl}(G)} \frac{|G|}{|C|} A_C\, . \label{eq: handle creation fusion}
\end{equation}
Using equation \eqref{eq: fusion idem} gives the eigenequation
\begin{equation}
	H A_C = \frac{|G|}{|C|} A_C\, . \label{eq: fusion handle EV}
\end{equation}

We summarise the most important data of of the fusion TQFT as follows:
\begin{equation*}
	\begin{aligned}
	&\mu: A \otimes A \rightarrow A &&\longleftrightarrow 
	&&\vcenter{
	\hbox{
	\vspace{-1em}
	\begin{tikzpicture}[tqft/cobordism/.style={draw},
		tqft/view from=outgoing,
		tqft/boundary separation=20pt,
		tqft/cobordism height=25pt,
		tqft/circle x radius=5pt,
		tqft/circle y radius=2pt,
		tqft/every boundary component/.style={draw,rotate=0}]
   		\pic[tqft,
	   		incoming boundary components=2,
	   		outgoing boundary components=1,
			offset=.5,
	   		rotate=90,
			name=b,
			anchor={(0,0)}];
		\node at ([xshift=12pt,yshift=-10pt]b-incoming boundary 1) {$\mu$};
	\end{tikzpicture} 
	}
	}
	\\
	&\mu(a_R,a_S) = \sum_{T \in \mathrm{Irr}(G)} N_{RS}^T a_T
	\\[2em]
	&\eta: \mathbb{C} \rightarrow A &&\longleftrightarrow 
	&&\vcenter{
	\hbox{
	\vspace{-1em}
	\begin{tikzpicture}[tqft/cobordism/.style={draw},
		tqft/view from=outgoing,
		tqft/boundary separation=20pt,
		tqft/cobordism height=40pt,
		tqft/circle x radius=7pt,
		tqft/circle y radius=4pt,
		tqft/every boundary component/.style={draw,rotate=0}]
   		\pic[tqft,
	   		incoming boundary components=0,
	   		outgoing boundary components=1,
			offset=.5,
	   		rotate=90,
			name=b,anchor={(0,0)}];
		\node at ([xshift=35pt,yshift=-8pt]b-incoming boundary 1) {$\eta$};
	\end{tikzpicture} 
	}
	}
	\\
	&\eta(1) = a_{R_0}
	\\[2em]
	&\varepsilon: A \rightarrow \mathbb{C} &&\longleftrightarrow 
	&&\vcenter{
	\hbox{
	\vspace{-1em}
	\begin{tikzpicture}[tqft/cobordism/.style={draw},
		tqft/view from=outgoing,
		tqft/boundary separation=20pt,
		tqft/cobordism height=40pt,
		tqft/circle x radius=7pt,
		tqft/circle y radius=4pt,
		tqft/every boundary component/.style={draw,rotate=0}]
   		\pic[tqft,
	   		incoming boundary components=1,
	   		outgoing boundary components=0,
			offset=.5,
	   		rotate=90,
			name=b,anchor={(0,0)}];
		\node at ([xshift=4pt,yshift=-14pt]b-incoming boundary 1) {$\varepsilon$};
	\end{tikzpicture} 
	}
	}
	\\
	&\varepsilon(a_R) = \delta_{R {R_0}}
	\\[2em]
	&H = \sum_{C \in \mathrm{Cl}(G)} \frac{|G|}{|C|} A_C &&\longleftrightarrow 
	&&\vcenter{
	\hbox{
	\begin{tikzpicture}[tqft/cobordism/.style={draw},
		tqft/view from=outgoing,
		tqft/boundary separation=20pt,
		tqft/cobordism height=30pt,
		tqft/circle x radius=6pt,
		tqft/circle y radius=3pt,
		tqft/every boundary component/.style={draw,rotate=0}]
   		\pic[tqft,
	   		incoming boundary components=2,
	   		outgoing boundary components=1,
			offset=.5,
	   		rotate=90,name=b];
   		\pic[tqft,
	   		incoming boundary components=0,
	   		outgoing boundary components=2,
	   		rotate=90,name=a,anchor=outgoing boundary 1,at=(b-incoming boundary 1)];
	\end{tikzpicture} 
	}
	}
	\end{aligned}
\end{equation*}
\\

\subsection{Measures of complexity in fusion TQFTs}
\label{sec: fusion complexity}
We now specialise the generator complexities in Definitions \ref{def: rank complexity} and \ref{def: rank handle complexity} to fusion algebras. This gives measures of complexity that are row-column dual to the complexities associated with the class algebra. 

First, we apply Theorem \ref{thm: generators} connecting generators and eigenvalues to the fusion algebra.
Since the combinatorial basis of the fusion algebra is given by elements $a_R$ and the eigenvalues in \eqref{eq: fusion EV} are characters $\chi^R_C$, we get the following.
\begin{proposition}
	\label{prop: fusion algebra generating set}
	Let $\{S_1, \dots, S_l\} \subseteq \mathrm{Irr}(G)$ and $\{a_{S_1}, \dots, a_{S_l}\}$ be the corresponding combinatorial basis elements.
	They multiplicatively generate the fusion algebra $R(G)$ if the list
	\begin{equation}
		(\chi^{S_1}_C, \dots, \chi^{S_l}_C)
	\end{equation}
	distinguishes all conjugacy classes $C \in \mathrm{Cl}(G)$.
\end{proposition}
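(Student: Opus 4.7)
The proposition is essentially an immediate specialisation of Theorem \ref{thm: generators} to the fusion algebra $R(G)$, using the eigenequation \eqref{eq: fusion EV}. The plan is to identify the abstract data of Theorem \ref{thm: generators} with the concrete data of the fusion algebra and then read off the conclusion.

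First I would note that $R(G)$ is a commutative semi-simple Frobenius algebra over $\mathbb{C}$: commutativity and associativity follow from the same properties of the tensor product of representations, and semi-simplicity is witnessed by the explicit projector basis $\{A_C : C \in \mathrm{Cl}(G)\}$ introduced in \eqref{eq: idem fus}, satisfying the idempotent relations \eqref{eq: fusion idem}. Hence Theorem \ref{thm: generators} applies to $A = R(G)$, with the set of primitive idempotents indexed by conjugacy classes $C \in \mathrm{Cl}(G)$ rather than by irreducible representations.

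Next I would specialise the input sequence $t_1, \dots, t_l$ of Theorem \ref{thm: generators} to the combinatorial basis elements $a_{S_1}, \dots, a_{S_l} \in R(G)$. The eigenequation \eqref{eq: fusion EV} states
\begin{equation*}
a_{S_i} A_C = \chi^{S_i}_C A_C,
\end{equation*}
so the irreducible character $\widehat{\chi}$ of $R(G)$ associated with the projector $A_C$ acts on the generator $a_{S_i}$ by $\widehat{\chi}(a_{S_i}) = \chi^{S_i}_C$. Consequently the list of eigenvalues $\bigl(\widehat{\chi}(a_{S_1}), \dots, \widehat{\chi}(a_{S_l})\bigr)$ attached to the projector $A_C$ is precisely $(\chi^{S_1}_C, \dots, \chi^{S_l}_C)$.

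The conclusion then follows from the equivalence in Theorem \ref{thm: generators}: the elements $a_{S_1}, \dots, a_{S_l}$ multiplicatively generate $R(G)$ if and only if these eigenvalue lists separate the projectors $A_C$, which under the identification above is exactly the condition that $(\chi^{S_1}_C, \dots, \chi^{S_l}_C)$ distinguishes all $C \in \mathrm{Cl}(G)$. There is no genuine obstacle here; the only thing to be a little careful about is matching the labelling conventions for projectors (by conjugacy classes in this row-column dual setting) versus the class-algebra case, so that one does not conflate $\widehat{\chi}^a(a_{S_i})$ with $\chi^{S_i}(g)$ for some $g \in G$ in the wrong direction.
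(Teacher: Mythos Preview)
Your proposal is correct and follows essentially the same route as the paper: the paper simply remarks that Proposition~\ref{prop: fusion algebra generating set} follows by applying Theorem~\ref{thm: generators} to the fusion algebra together with the eigenequation~\eqref{eq: fusion EV}, and you have spelled out precisely this specialisation.
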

We also have the analogous result for a generating set including combinatorial basis elements alongside the handle creation element $H$.
\begin{proposition}
	\label{prop: fusion algebra generating set handles}
	Let $\{S_1, \dots, S_l\} \subseteq \mathrm{Irr}(G)$ , $\{a_{S_1}, \dots, a_{S_l}\}$ be the corresponding combinatorial basis elements, and $H$ be the handle creation operator \eqref{eq: handle creation fusion}.
	The set $\{H, a_{S_1}, \dots, T_{S_l}\}$ multiplicatively generates the fusion algebra $R(G)$ if the list
	\begin{equation}
		\left(\frac{|G|}{|C|}, \chi^{S_1}_C, \dots, \chi^{S_l}_C\right)
	\end{equation}
	distinguishes all conjugacy classes $C \in \mathrm{Cl}(G)$.
\end{proposition}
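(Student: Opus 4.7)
The plan is to derive this as a direct specialization of Theorem \ref{thm: generators} to the fusion algebra $R(G)$, treating the handle creation element $H$ on exactly the same footing as the combinatorial basis elements $a_{S_j}$. The key preparatory observation is that $R(G)$ is a commutative semi-simple algebra with projector basis $\{A_C : C \in \mathrm{Cl}(G)\}$ satisfying \eqref{eq: fusion idem}, so the hypotheses of Theorem \ref{thm: generators} apply whenever we are given any sequence of elements of $R(G)$. The projectors are labelled by conjugacy classes, so distinguishing projectors is the same thing as distinguishing conjugacy classes.

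Next, I would assemble the eigenvalues of the proposed generating set on the projector basis. From \eqref{eq: fusion EV}, the eigenvalue of $a_{S_j}$ on $A_C$ is exactly $\chi^{S_j}_C$, and from \eqref{eq: fusion handle EV} the eigenvalue of $H$ on $A_C$ is $|G|/|C|$. Concatenating these, the list of eigenvalues of $(H, a_{S_1}, \dots, a_{S_l})$ on the projector $A_C$ reads
\begin{equation*}
\left(\frac{|G|}{|C|},\ \chi^{S_1}_C,\ \dots,\ \chi^{S_l}_C\right),
\end{equation*}
matching the list stated in the proposition.

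Finally, I would invoke Theorem \ref{thm: generators} with $t_0 = H$ and $t_j = a_{S_j}$: the elements $H, a_{S_1}, \dots, a_{S_l}$ multiplicatively generate $R(G)$ if and only if these eigenvalue lists distinguish every pair of distinct projectors $A_C \neq A_{C'}$, i.e.\ every pair of distinct conjugacy classes. This is exactly the hypothesis of the proposition, so the conclusion follows. The argument is the row-column mirror image of Proposition \ref{prop: class algebra generating set handle}: there the generating condition translated into distinguishing projectors $P_R$ via the central character plus $|G|^2/d_R^2$, whereas here it translates into distinguishing projectors $A_C$ via $\chi^{S_j}_C$ plus $|G|/|C|$.

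There is essentially no obstacle: once Theorem \ref{thm: generators} and the eigenequations \eqref{eq: fusion EV} and \eqref{eq: fusion handle EV} are in hand, the proof reduces to reading off eigenvalues and invoking the theorem. The only mild care needed is to confirm that adjoining $H$ — which is not a combinatorial basis element — to the collection of $a_{S_j}$'s does not spoil the hypothesis of Theorem \ref{thm: generators}, but the theorem is stated for an arbitrary sequence of elements of $A$, so no modification is required.
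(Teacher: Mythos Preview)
Your proposal is correct and matches the paper's own proof, which is the one-liner ``This follows from Theorem \ref{thm: generators} and the eigenvalue equation in \eqref{eq: fusion handle EV}.'' You have simply unpacked that sentence by recording the eigenvalues of $H$ and of each $a_{S_j}$ on $A_C$ and noting that Theorem \ref{thm: generators} applies to arbitrary elements of $A$, not just combinatorial basis elements.
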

\begin{proof}
	This follows from Theorem \ref{thm: generators} and the eigenvalue equation in \eqref{eq: fusion handle EV}.
\end{proof}

Studying the circle-and-handle complexity corresponds to studying the character table supplemented by conjugacy class sizes.
In particular,
the list
\begin{equation}
	\left(\frac{|G|}{|C|}, \chi^{S_1}_C, \dots, \chi^{S_l}_C\right)
\end{equation}
distinguishes all projectors $A_C$ if and only if the list
\begin{equation}
	(|C|, \chi^{S_1}_C, \dots, \chi^{S_l}_C)\label{eq: circle handle list 2}
\end{equation}
distinguishes all projectors $A_C$.
This is the row-column dual version of equation \eqref{eq: circle handle list}.

Similarly to equation \eqref{eq: complexity boundary class}, we introduce the following notation
\newcommand*{\brankF}{N_{\text{fus}}}
\newcommand*{\bhrankF}{N^{\text{ch}}_{\text{fus}}}
\begin{equation}
	\begin{aligned}
		\brankF(G) &\equiv N(R(G)) \, , \\
		\bhrankF(G) &\equiv N^{\text{ch}}(R(G)) \, ,
	\end{aligned}
\label{eq: complexity boundary fusion}
\end{equation}
where the right-hand sides are the complexities defined in Definition \ref{def: rank complexity} and \ref{def: rank handle complexity} for $A=R(G)$.
These generator complexities correspond to the minimal values of $l$ for which we can apply Proposition \ref{prop: fusion algebra generating set} and \ref{prop: fusion algebra generating set handles} .

The following definition introduces the average squared dimension associated with generators of $R(G)$.
\begin{definition}[Average squared dimension over minimal generating subsets]
Consider a fusion algebra $R(G)$ with combinatorial basis $\mathcal{B} = \{ a_R \, : \, R \in \mathrm{Irr}(G)\}$ and circle generator complexity $\brankF(G)$.
In general, there can be several distinct choices of minimal generating subsets $F \subseteq \mathcal{B}$ with $|F| = \brankF(G)$.
Let $k$ be the total number of minimal generating subsets and
\begin{equation}
	F_1, \dots, F_k \subseteq \mathcal{B}
\end{equation}
be the corresponding subsets.
We define the average squared dimension over the minimal generating subsets as
\begin{equation}
	\mathcal{R}_{\text{gens}}(G) = \frac{1}{k\brankF(G)} \sum_{i=1}^k \sum_{a_R \in F_i} d_R^2 \, .\label{eq: avg gens dim squared}
\end{equation}
\end{definition}

\section{Statistics}
\label{sec: stats}
Equations \eqref{eq: circle handle list} and \eqref{eq: circle handle list 2} show that the generator complexity of class algebras and fusion algebras are directly related to properties of character tables.
In particular, the generator complexity of class algebras involves subsets of columns of the character table which distinguish all the rows while the generator complexity of fusion algebras involves subsets of rows which distinguish all the columns. 
Below, we have used a subset of character tables in the GAP character table library \cite{CTblLib} to study the measures of complexity defined in Sections \ref{sec: class complexity} and \ref{sec: fusion complexity}.
Sage \cite{sagemath} and GAP \cite{GAP4} code for generating the data used in this section is attached alongside the arXiv version of this paper.

In general, distinct groups can have the same character table and therefore the same class algebra and fusion algebra.
However, in the following sections it is useful to think of $G$ as a label for  character tables.
All uses of this label $G$ are independent of the choice of a representative group.
The dataset extracted from \cite{CTblLib} only uses a single representative per character table, there are no duplicate or equivalent character tables.
The following definition is used repeatedly in the following subsections.
\begin{definition}[Subset average]
	Let $S$ be a subset of all character tables and $f(G)$ be a function on character tables.
	We define the subset average of $f$ as
	\begin{equation}
		\mathbb{E}(S,f) = \frac{1}{|S|} \sum_{G \in S} f(G) \, .
	\end{equation}
\end{definition}

\subsection{Generator complexity}
The first statistic that we study is the average generator complexity as a function of $|\mathrm{Cl}(G)| = |\mathrm{Irr}(G)|$.
Let $K_n$ be the subset of all inequivalent character tables of size $n=|\mathrm{Cl}(G)|=|\mathrm{Irr}(G)|$.
We first study the following average circle generator complexities, defined in \eqref{eq: complexity boundary class} and \eqref{eq: complexity boundary fusion}
\begin{equation}
	\begin{aligned}
		&\mathbb{E}({K_n}, \brankC) = \frac{1}{|K_n|} \sum_{G \in K_n} \brankC{(G)} \, ,\\ 
		&\mathbb{E}(K_n, \brankF) = \frac{1}{|K_n|} \sum_{G \in K_n} \brankF{(G)} \, .
\end{aligned}
\end{equation}
In Figure \ref{fig: no gens boundary}, we plot these averages for the subset of character tables in \cite{CTblLib} with $n=|\mathrm{Cl}(G)|=|\mathrm{Irr}(G)| \in \{2,\dots, 30\}$.

\begin{figure}
	\centering
	\includegraphics[width=\textwidth]{./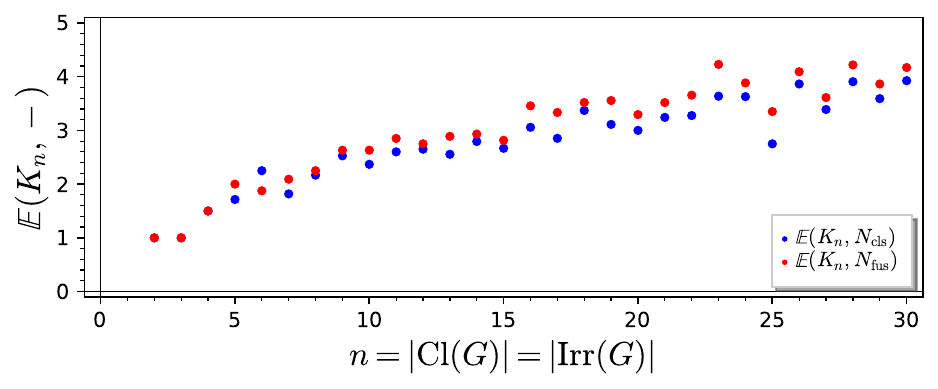}
	\caption{
			In blue, we have the average circle generator complexity of class algebras $\mathbb{E}(K_n, \brankC)$ of dimension $n = |\mathrm{Cl}(G)| \in\{2,\dots,30\}$.
			In red, we have the average circle generator complexity of fusion algebras $\mathbb{E}(K_n, \brankF)$ of dimension $n = |\mathrm{Irr}(G)| \in \{2, \dots, 30\}$. 
			}
	\label{fig: no gens boundary}
\end{figure}

By inspecting Figure \ref{fig: no gens boundary}, we see that the class algebra tends to have smaller average circle generator complexity at fixed $n$, with exceptions at $n=2,3,4,6$.
In particular, we have the following proposition.
\begin{proposition}
	Let $K_n$ be the set of inequivalent character tables of size $n$.
	The average circle generator complexity of class algebras tends to be smaller than the average circle generator complexity of fusion algebras.
	Specifically, we have
	\begin{equation}
		\mathbb{E}(K_n, \brankC) < \mathbb{E}(K_n, \brankF), \quad \text{for $n=5,7,8,\dots,30$}
	\end{equation}
	and 
	\begin{equation}
		\begin{aligned}
			&\mathbb{E}(K_n, \brankC) > \mathbb{E}(K_n, \brankF) , \quad \text{for $n=6$} \\
			&\mathbb{E}(K_n, \brankC) = \mathbb{E}(K_n, \brankF) , \quad \text{for $n=2,3,4$}\, .
		\end{aligned}
	\end{equation}
	\label{prop: boundary complexity}
\end{proposition}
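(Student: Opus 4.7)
The proposition is essentially a finite empirical statement about the character tables in the GAP library, so the plan is to reduce it to a finite computation and then isolate the small cases that need individual attention. My first step is to invoke Proposition \ref{prop: class algebra generating set} (for the class side) and Proposition \ref{prop: fusion algebra generating set} (for the fusion side) to convert the problem of computing $\brankC(G)$ and $\brankF(G)$ into purely combinatorial problems on the character table of $G$: namely, finding the minimum number of columns whose entries distinguish all rows, respectively the minimum number of rows whose entries distinguish all columns. Once phrased this way, each complexity is a well-defined integer that depends only on the character table, so the averages $\mathbb{E}(K_n,\brankC)$ and $\mathbb{E}(K_n,\brankF)$ can be evaluated by iterating over $K_n$.

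Next, I would split the range $n \in \{2,\dots,30\}$ into three regimes. For $n \in \{2,3,4\}$ I expect the equality $\mathbb{E}(K_n,\brankC)=\mathbb{E}(K_n,\brankF)$ to follow from a case analysis of the (very few) inequivalent character tables: the tables are essentially symmetric enough under row/column exchange at these sizes that the minimum distinguishing column set and the minimum distinguishing row set coincide in cardinality for every $G \in K_n$. For $n=6$, I would locate the specific character tables that tip the average the other way by direct inspection of $K_6$, exhibiting enough cases where $\brankF(G) < \brankC(G)$ to reverse the comparison. For the remaining $n \in \{5,7,8,\dots,30\}$ the statement is a direct computational verification: enumerate $K_n$, compute $\brankC$ and $\brankF$ for each $G$, average, and compare.

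The algorithmic heart is the minimum-generating-subset search. For each character table, my plan is to test subsets of columns of increasing size $l = 1, 2, \dots$ and check whether the tuples $(\chi^R_{D_1}, \dots, \chi^R_{D_l})$ are pairwise distinct across $R \in \mathrm{Irr}(G)$; the first $l$ for which some subset works is $\brankC(G)$, and the dual procedure yields $\brankF(G)$. A naive enumeration is exponential in $|\mathrm{Cl}(G)|$, so the implementation should prune via a greedy refinement of the equivalence partition on rows (respectively columns), which keeps the search tractable up to $n=30$; the Sage/GAP code referenced in the paper carries this out.

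The main obstacle is ensuring that the computation is both correct and complete: correctness requires a faithful implementation of the distinguishing criterion from Theorem \ref{thm: generators}, and completeness requires that the GAP character table library indeed contains all inequivalent character tables of each size $n \leq 30$ (otherwise the statement of the proposition should be read relative to the library used). Assuming the library is exhaustive in this range, the proposition follows. The only mathematically substantive part is the identification and analysis of the exceptional small cases $n \in \{2,3,4,6\}$, which requires checking by hand that the computational outputs there match the claimed pattern.
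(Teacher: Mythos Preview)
Your proposal is correct and essentially matches the paper: the proposition is presented there purely as an empirical reading of Figure~\ref{fig: no gens boundary}, produced by the attached Sage/GAP code iterating over the GAP character table library for $n\le 30$, with no separate by-hand treatment of the small cases $n\in\{2,3,4,6\}$. Your caveat that the statement should be read relative to the library is apt, since the paper works with the tables available in \cite{CTblLib} rather than a provably exhaustive enumeration of all size-$n$ character tables.
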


Secondly, we study the average circle-and-handle generator complexities (defined in \eqref{eq: complexity boundary class}, \eqref{eq: complexity boundary fusion}),
\begin{equation}
	\begin{aligned}
		&\mathbb{E}(K_n, \bhrankC)  = \frac{1}{|K_n|} \sum_{G \in K_n} \bhrankC{(G)} \, , \\
		&\mathbb{E}(K_n, \bhrankF) = \frac{1}{|K_n|} \sum_{G \in K_n} \bhrankF{(G)} \, .
\end{aligned}
\end{equation}
We plot these averages in Figure \ref{fig: no gens bulk} for the subset of character tables in \cite{CTblLib} with $n=|\mathrm{Cl}(G)|=|\mathrm{Irr}(G)| \in \{2,\dots, 30\}$.
\begin{figure}
	\centering
	\includegraphics[width=\textwidth]{./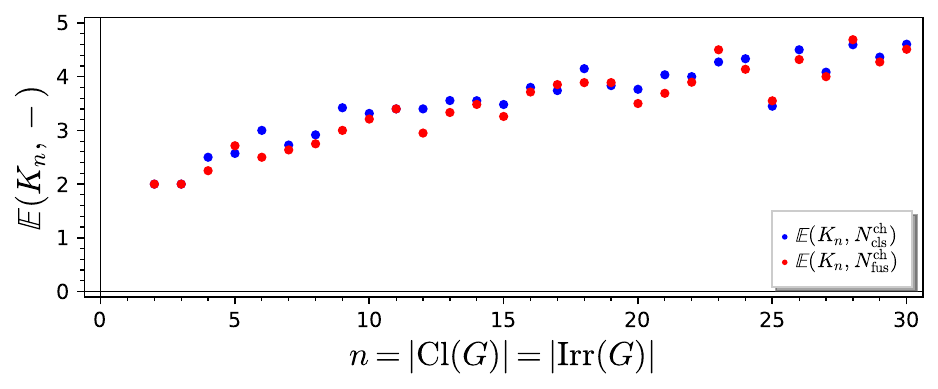}
	\caption{
			In blue, we have the average circle-and-handle generator complexity of class algebras $\mathbb{E}(K_n, \bhrankC)$ of dimension $n = |\mathrm{Cl}(G)| \in\{2,\dots,30\}$.
			In red, we have the average circle-and-handle generator complexity of fusion algebras $\mathbb{E}(K_n, \bhrankF)$ of dimension $n = |\mathrm{Irr}(G)| \in \{2, \dots, 30\}$.
	}
	\label{fig: no gens bulk}
\end{figure}
For this average, the relationship is flipped, but there are also more exceptions.
Inspecting the figure shows the following proposition.
\begin{proposition}
	Let $K_n$ be the set of inequivalent character tables of size $n$.
	For the majority of points the average circle-and-handle generator complexity of class algebras is larger than the corresponding average for fusion algebras.
	That is,
	\begin{equation}
		\begin{aligned}
			&\mathbb{E}(K_n, \bhrankC) > \mathbb{E}(K_n, \bhrankF)\, , \\
			&\text{for $n\in \{2,\dots,30\}\backslash\{2,3,5,11,17,19,23,25,28\}$}
		\end{aligned}
	\end{equation}
	and
	\begin{equation}
		\begin{aligned}
			&\mathbb{E}(K_n, \bhrankC) < \mathbb{E}(K_n, \bhrankF) , \quad \text{for $n=5,17,19,23,25,28$} \\
			&\mathbb{E}(K_n, \bhrankC) = \mathbb{E}(K_n, \bhrankF) , \quad \text{for $n=2,3,11$} \, .
		\end{aligned}
	\end{equation}
\end{proposition}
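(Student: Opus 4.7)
The statement is an empirical assertion about averages of two integer-valued complexities computed over the finite set $K_n$ of inequivalent character tables of size $n$, for $n$ ranging over the small range $\{2,\dots,30\}$. My approach is therefore a computational verification: for each such $n$, enumerate the character tables in $K_n$ via the GAP character table library \cite{CTblLib}, compute $\bhrankC(G)$ and $\bhrankF(G)$ for every $G \in K_n$, average, and compare. The plan is to reduce each individual complexity to a combinatorial search on a small integer matrix and then exhibit the tabulated averages directly, marking the six exceptional values $n \in \{5,17,19,23,25,28\}$ where the inequality flips and the three values $n \in \{2,3,11\}$ where equality holds.

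The first step is to recast the computation of $\bhrankC(G)$ and $\bhrankF(G)$ in a form that a computer can execute efficiently. By Proposition \ref{prop: class algebra generating set handle} together with the equivalence between the lists following \eqref{eq: circle handle list}, the quantity $\bhrankC(G)$ equals one plus the smallest $l$ such that some $l$-subset $\{D_1,\dots,D_l\}\subseteq\mathrm{Cl}(G)$ has the property that the list
\begin{equation}
(d_R,\chi^R_{D_1},\dots,\chi^R_{D_l})
\end{equation}
separates all $R \in \mathrm{Irr}(G)$. Dually, by Proposition \ref{prop: fusion algebra generating set handles} and the equivalence following \eqref{eq: circle handle list 2}, $\bhrankF(G)$ equals one plus the smallest $l$ such that some $l$-subset $\{S_1,\dots,S_l\}\subseteq \mathrm{Irr}(G)$ yields a list
\begin{equation}
(|C|,\chi^{S_1}_C,\dots,\chi^{S_l}_C)
\end{equation}
separating all $C \in \mathrm{Cl}(G)$. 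Thus both complexities are read off directly from the character table augmented by the dimension row and the class-size column: for $\bhrankC$, append the column $d_R$ to the table and find the smallest subset of original columns whose addition renders all rows distinct; for $\bhrankF$, transpose the problem and append the row $|C|$. This is the classical \emph{minimum test collection} / \emph{separating set} problem for the two partitions induced on rows and columns.

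The second step is the actual search. For $n\le 30$ and the groups appearing in the library, both $|\mathrm{Cl}(G)|$ and $|\mathrm{Irr}(G)|$ are modest, but naive subset enumeration is $2^n$ and must be pruned. I would proceed in increasing $l$: start with $l=0$ and check whether the dimension column (resp.\ class-size row) alone separates; if not, iterate over $l$-subsets using a depth-first search that, at each partial choice, computes the current equivalence relation on rows (resp.\ columns) induced by the chosen characters and prunes branches where no further column can refine a remaining block of a single type. Symmetry under the action of the Galois group on Galois-conjugate columns can be used to restrict the search, and for each $G$ the values $\bhrankC(G)$ and $\bhrankF(G)$ are just two integers in $\{1,\dots,n\}$ that are then averaged over $G \in K_n$.

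The main obstacle is computational, not conceptual: the number $|K_n|$ grows rapidly with $n$ and the subset search is worst-case exponential, so one needs the library traversal together with the pruning above to terminate in reasonable time; a secondary subtlety is to ensure that the library is traversed without double-counting character tables that arise from non-isomorphic groups sharing a character table, which the text resolves by keeping one representative per table. Once the tabulation terminates, the proof reduces to reading off $\mathbb{E}(K_n,\bhrankC)$ and $\mathbb{E}(K_n,\bhrankF)$ for each $n$ and observing which of $<$, $>$, $=$ holds, which matches the partition of $\{2,\dots,30\}$ given in the statement. The accompanying Sage/GAP code provides exactly this verification.
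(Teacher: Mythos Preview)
Your proposal is correct and essentially matches the paper's own treatment: the proposition is established purely by computation over the character tables in the GAP library \cite{CTblLib}, and the paper simply states it as an observation read off from Figure~\ref{fig: no gens bulk} (``Inspecting the figure shows the following proposition''), with the underlying Sage/GAP code attached. Your reduction of $\bhrankC$ and $\bhrankF$ to a minimum-separating-set search on the augmented character table via \eqref{eq: circle handle list} and \eqref{eq: circle handle list 2} is exactly right, and your algorithmic description is in fact more detailed than the brute-force combination search the paper records in its Appendix.
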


In Proposition \ref{prop: boundary complexity}, we see that the average circle generator complexity of class algebras is smaller than the corresponding quantity for fusion algebras for all $n>6$ in Figure \ref{fig: no gens boundary}.
Based on this observation, we make the following conjectures.
\begin{conjecture}[Circle generator complexity (Strong version)]
	The strong version says that the average circle generator complexity of class algebras is smaller than the corresponding average for fusion algebras, 
	\label{conj: gen strong}
	\begin{equation}
		\mathbb{E}(K_n, \brankC) < \mathbb{E}(K_n, \brankF) 
	\end{equation}
	for all $n>6$.
\end{conjecture}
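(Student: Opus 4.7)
The plan is to attack Conjecture 1 via a two-step strategy: first, establish a pointwise comparison $N_{\text{cls}}(G) \le N_{\text{fus}}(G)$ on a large family of groups, and second, control the remaining exceptional character tables so that the inequality survives averaging over $K_n$.

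For the pointwise step, I would work directly with the characterizations given for $N_{\text{cls}}(G)$ and $N_{\text{fus}}(G)$ in Propositions \ref{prop: class algebra generating set} and \ref{prop: fusion algebra generating set}. Concretely, $N_{\text{cls}}(G)$ is the minimum number of columns of the normalized matrix $M^R_C = |C|\chi^R_C/d_R$ whose restriction separates all rows, while $N_{\text{fus}}(G)$ is the minimum number of rows of the raw character table $\chi^R_C$ whose restriction separates all columns. The structural asymmetry to exploit is the factor $|C|/d_R$: rescaling a column by $|C|$ multiplies its entries by potentially large integers, so a single well-chosen conjugacy class typically produces many more distinct normalized values across the rows than a single irreducible character produces across the columns. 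I would make this precise by constructing explicit column generators from conjugacy classes whose sizes have distinct prime factorizations, leveraging Burnside--Brauer type statements expressing every central element as a polynomial in a small number of class sums.

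To bound $N_{\text{fus}}(G)$ from below I would use classical estimates on coincidences among character values, for instance the observation that any non-faithful irreducible character only separates cosets of its kernel, together with the fact that small-dimensional characters tend to take only a few distinct values. Combined with the upper bound on $N_{\text{cls}}(G)$, this should yield a pointwise inequality under mild genericity conditions on $G$. The remaining families -- abelian groups (where Pontryagin duality forces $N_{\text{cls}} = N_{\text{fus}}$), $p$-groups with large center, and extremal constructions such as Frobenius and Camina groups -- I would analyse case by case, showing either that the inequality still holds or that their contribution to $|K_n|$ is asymptotically negligible.

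The main obstacle, I expect, is the averaging itself: the statement concerns $\mathbb{E}(K_n,\cdot)$ where $K_n$ is the set of inequivalent character tables of size $n$ in the GAP library, a dataset whose composition as a function of $n$ is not under good analytic control. Even a sharp pointwise bound for generic $G$ does not obviously average correctly if a small number of groups with anomalously small $N_{\text{fus}}(G)$ or anomalously large $N_{\text{cls}}(G)$ dominate $K_n$ for particular $n$. Ruling out such anomalies, and identifying why the threshold is exactly $n > 6$ rather than some larger cutoff, will require either a much finer model of how $|K_n|$ decomposes into group families (simple, nilpotent, solvable, direct products of smaller groups) or a genuinely uniform pointwise bound; neither is currently in hand. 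A weaker statement of the form $\mathbb{E}(K_n, N_{\text{cls}}) \le \mathbb{E}(K_n, N_{\text{fus}})$ for all sufficiently large $n$ may be attainable with these methods, but pinning the cutoff down to $n = 6$ seems to lie at the boundary of what one can hope to prove without further input about the distribution of character tables in the library.
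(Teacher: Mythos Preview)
The statement you are addressing is a \emph{conjecture} in the paper, not a theorem. The paper offers no proof: the inequality $\mathbb{E}(K_n, N_{\text{cls}}) < \mathbb{E}(K_n, N_{\text{fus}})$ for $n>6$ is proposed purely on the basis of the empirical observation recorded in Proposition~\ref{prop: boundary complexity} (which covers only $n \le 30$, using data from the GAP character table library). There is therefore no ``paper's own proof'' to compare your proposal against.

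Your proposal is honest about this situation --- it is a strategy sketch rather than a proof, and you explicitly flag that the averaging step and the sharp cutoff at $n=6$ are not under control. That assessment is accurate, and is exactly why the paper states this as a conjecture. A few specific remarks on the sketch itself. First, you describe $K_n$ as ``the set of inequivalent character tables of size $n$ in the GAP library''; in the paper $K_n$ is the set of \emph{all} inequivalent character tables of size $n$ (a finite set by Landau-type bounds), and the GAP library is used only as a sampling proxy for the empirical plots. This does not make the averaging problem easier --- if anything it is harder, since one now needs structural control over the full set of groups with $n$ conjugacy classes rather than over a curated list. Second, your pointwise heuristic (that the rescaling by $|C|/d_R$ tends to spread normalized character values more than raw characters are spread across columns) is plausible but would need to be made quantitative to be useful; as stated it does not exclude the possibility that many groups in $K_n$ have $N_{\text{cls}}(G) = N_{\text{fus}}(G)$ with a small minority going the wrong way. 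Third, the abelian case you mention is a genuine obstruction to any strict pointwise inequality, so the argument must indeed pass through some density statement about non-abelian tables in $K_n$, and no such statement is supplied.

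In short: there is no gap relative to the paper, because the paper proves nothing here; but your proposal is also not a proof, and the obstacles you identify are real and remain open.
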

A weaker version in the same spirit conjectures the existences of a finite transition point different from $n=6$.
\begin{conjecture}[Circle generator complexity (Weak version)]
	\label{conj: gen weak}
	There exists a finite $m \in \mathbb{N}$ such that for every $n > m$ the average circle generator complexity of class algebras is smaller than the average circle generator complexity of fusion algebras.
	That is,
	\begin{equation}
		\mathbb{E}(K_n, \brankC) < \mathbb{E}(K_n, \brankF), \quad \forall n > m \,.
	\end{equation}
\end{conjecture}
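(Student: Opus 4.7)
My plan is to decompose $K_n$ into manageable sub-families and to show that $\mathbb{E}(K_n,\brankF)-\mathbb{E}(K_n,\brankC)$ becomes positive for all sufficiently large $n$. The first reduction isolates the abelian contribution: if $G$ is abelian then $|C|=1$ for every class and $d_R=1$ for every irreducible, so the normalised character $|C|\chi^R_C/d_R$ coincides with $\chi^R_C$, and the character table is self-dual under the Pontryagin pairing between $G$ and $\widehat{G}$. Hence $\brankC(G)=\brankF(G)$ for every abelian $G$, and the abelian subset of $K_n$ contributes equally to both averages, dropping out of the difference.

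For non-abelian tables, the mechanism that should drive the strict inequality is that the normalisation factors $|C|$ and $1/d_R$ entering the class-algebra eigenvalues generically break ties that occur among bare character values. Two irreducibles $R\neq R'$ with $\chi^R_C=\chi^{R'}_C$ are already distinguished by the single column $C$ in the class algebra whenever $\chi^R_C/d_R\neq\chi^{R'}_C/d_{R'}$, a condition that is automatic as soon as $d_R\neq d_{R'}$ and $\chi^R_C\neq 0$. Accordingly, I would try to identify a density-one sub-family $\mathcal{G}_n\subseteq K_n$ on which (i) some conjugacy class has injective central character $R\mapsto |C|\chi^R_C/d_R$, forcing $\brankC(G)=1$, and (ii) no single irreducible character is injective on $\mathrm{Cl}(G)$, forcing $\brankF(G)\geq 2$. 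Combining a lower bound $|\mathcal{G}_n|/|K_n|\to 1$ with a crude $O(n)$ bound on $|\brankC(G)-\brankF(G)|$ on the complement would then give the conjectured inequality for large $n$, provided the density on $\mathcal{G}_n$ converges fast enough to dominate the worst-case behaviour on $K_n\setminus\mathcal{G}_n$.

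The main obstacle is that $|K_n|$ itself is not well understood asymptotically, and there is no uniform input in the literature controlling the density of character tables for which the central character is injective on irreducibles. A tractable first pass would be to restrict attention to a specific family with developed character theory, such as the symmetric groups (using the combinatorial machinery of \cite{KempRam} to control $\brankC(S_n)$ and explicit partition-based arguments for $\brankF(S_n)$), the dihedral groups, or $p$-groups of bounded nilpotency class, and to verify the analogue of the conjecture within that family. The hardest step is then to argue that as $n$ grows, $K_n$ is dominated in the requisite statistical sense by tables whose character-theoretic behaviour matches the dominant family. In the absence of any result describing the joint distribution of class sizes, irreducible dimensions, and character values across the classification of all finite groups, a rigorous proof of even the weak version of the conjecture is likely to stall at precisely this reduction, which is why the evidence presented in Figure \ref{fig: no gens boundary} remains conjectural rather than theorematic.
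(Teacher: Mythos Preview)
The statement you are addressing is labelled \emph{Conjecture} in the paper, and the paper supplies no proof: the only support is the empirical data in Figure~\ref{fig: no gens boundary} for $n\leq 30$ together with Proposition~\ref{prop: boundary complexity}. There is therefore no argument in the paper against which to benchmark your proposal; what you have written is a heuristic strategy for attacking an open problem rather than a reconstruction of an existing proof, and you yourself recognise this in your final paragraph.

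On the substance of the strategy, two remarks. First, your abelian reduction is correct and is a clean observation: for abelian $G$ the Pontryagin duality between $G$ and $\widehat G$ interchanges rows and columns of the character table, so $\brankC(G)=\brankF(G)$ and these tables cancel from the difference $\mathbb{E}(K_n,\brankF)-\mathbb{E}(K_n,\brankC)$. Second, your heuristic that the normalisation $\chi^R_C\mapsto \chi^R_C/d_R$ ``generically breaks ties'' is one-sided: the same normalisation can just as easily \emph{create} ties, for instance whenever $\chi^R_C/d_R=\chi^{R'}_C/d_{R'}$ with $\chi^R_C\neq\chi^{R'}_C$ (take $(\chi^R_C,d_R)=(2,2)$ and $(\chi^{R'}_C,d_{R'})=(3,3)$). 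So the direction of the effect is not forced by the algebra alone; any proof along your lines would have to explain why tie-breaking dominates tie-creation statistically across $K_n$, and that is precisely the uncontrolled distributional input you flag as missing. Until one has some handle on the composition of $K_n$ for large $n$---even the cardinality $|K_n|$ is not understood---the density step in your outline cannot be made rigorous, and the statement remains a conjecture.
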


For the circle-and-handle generators, exceptions continue to occur in Figure \ref{fig: no gens bulk} for all $n$.
Therefore, we do not expect similar conjectures to hold in this case, but this is an interesting open question to study in the future. 
\begin{open}[Circle-and-handle generator complexity]
	Do exceptions to
	\begin{equation}
		\mathbb{E}(K_n, \bhrankC) > \mathbb{E}(K_n, \bhrankF) 
	\end{equation}
	continue to exist for larger $n$?
\end{open}

Based on the data presented in this subsection, the average circle generator complexity of class algebras tends to be smaller than the average circle generator complexity of fusion algebras.
A similar statement does not seem to be true for the average circle-and-handle generator complexity.
\\

\subsection{Average class size and dimension}
The following section studies the connection between conjugacy class sizes and generators of the class algebra.
The dual question involves studying dimensions of irreducible representations and generators of the fusion algebra.

First, it is useful to introduce a normalizing constant.
The average class size of a group $G$ is 
\begin{equation}
	\mathcal{N}(G) = \frac{1}{|\mathrm{Cl}(G)|} \sum_{C \in \mathrm{Cl}(G)} |C| = \frac{|G|}{|\mathrm{Cl}(G)|} \, ,
\end{equation}
which is equal to the average squared dimension given by
\begin{equation}
	\frac{1}{|\mathrm{Irr}(G)|} \sum_{R \in Irr(G)} d_R^2 = \frac{|G|}{|Irr(G)|} = \mathcal{N}(G) \, .
\end{equation}
This is an important group invariant in the following work.

Let us introduce here the averages that are relevant for the following work.
Let $X_N$ be the set of all inequivalent character tables of finite groups up to size $N$,
and $B_{N,n} \subseteq X_N$ be the subset of character tables with average class size $\mathcal{N}(G) = n$ given by
\begin{equation}
	B_{N,n} = \{ G \in X_N \quad \text{s.t.} \quad \mathcal{N}(G) = n\} \, .
\end{equation}
We study the following averages of the functions defined in \eqref{eq: avg gens class size} and \eqref{eq: avg gens dim squared}, as functions of $n$
\begin{equation}
	\begin{aligned}
		\mathbb{E}(B_{N,n},\mathcal{C}_{\text{gens}}) &= \frac{1}{|B_{N,n}|} \sum_{G \in B_{N,n}} \mathcal{C}_{\text{gens}}(G) \, , \\
		\mathbb{E}(B_{N,n}, \mathcal{R}_{\text{gens}}) &= \frac{1}{|B_{N,n}|} \sum_{G \in B_{N,n}} \mathcal{R}_{\text{gens}}(G) \, .
	\end{aligned}
\end{equation}
In Figures \ref{fig: avg gen class size} and \ref{fig: avg gen dim}, we plot these averages for the set of character tables $X_{30}$, taken from \cite{CTblLib}.
We observe that there are roughly as many points above and below the lines $y(n)\equiv\mathbb{E}(B_{30,n},\mathcal{C}_{\text{gens}})=n$ and $y(n)\equiv\mathbb{E}(B_{30,n},\mathcal{R}_{\text{gens}})=n$ over sizable ranges of $n$, respectively. 
To quantify this, we study the number of accumulated points above and below the line $y(n)=n$, up to fixed values of $n$, in Figures \ref{fig: avg gen class size} and \ref{fig: avg gen dim}.
We use the following definition to formulate our observation here.
\begin{definition}
	Let $l$ be a fixed real number. We define the cumulative ratio of the number of points above and below the line $y(n)=n$ in Figure \ref{fig: avg gen class size} for $n\leq l$ to be 
	\begin{equation}
		F_{\text{cls}}(N,l)  = \frac{| \{ n \leq l\, \text{~s.t. $\mathbb{E}(B_{N,n}, \mathcal{C}_{\text{gens}}) > n$}\} |}{| \{ n \leq l \, \text{~s.t. $\mathbb{E}(B_{N,n}, \mathcal{C}_{\text{gens}}) < n$}\} |} \, . \label{def: eq F class}
	\end{equation}
	The corresponding quantity for fusion algebras is given by
	\begin{equation}
		F_{\text{fus}}(N, l)  = \frac{| \{ n \leq l \, \text{~s.t. $\mathbb{E}(B_{N,n}, \mathcal{R}_{\text{gens}}) > n$}\} |}{ | \{ n \leq l \, \text{~s.t. $\mathbb{E}(B_{N,n}, \mathcal{R}_{\text{gens}}) < n$}\} |}
		\, .\label{def: eq F fusion}
	\end{equation}
	This is the cumulative ratio of number of points above and below the line $y(n) = n$ in Figure \ref{fig: avg gen dim} for $n \leq l$.
	\label{def: ratio}
\end{definition}
These cumulative ratios are plotted in Figures \ref{fig: avg gen class size ratio} and \ref{fig: avg gen dim ratio}.
Let us now summarize our observation as follows.
\begin{proposition}
	\label{prop: ratio}
	The cumulative ratio of points above and below the line $y(n)=n$ in Figure \ref{fig: avg gen class size} has the following properties
	\begin{equation}
		 \begin{aligned}
			F_{\text{cls}}(30,l) &> 1, \quad \forall l  \\
			F_{\text{cls}}(30, \infty) &= \frac{198}{193} \approx 1.026, \quad \forall l  \\
		 \end{aligned}
	\end{equation}
	where $F_{\text{cls}}(30, \infty)$ is the minimum.
	The cumulative ratio of points above and below the line $y(n)=n$ in Figure \ref{fig: avg gen dim} has the following properties
	\begin{equation}
		 \begin{aligned}
			F_{\text{fus}}(30, l) &> 0.8 , \quad \forall l  \\
			F_{\text{fus}}(30, \infty) &= \frac{179}{212} \approx 0.844, \quad \forall l  \\
		 \end{aligned}
	\end{equation}
	where $F_{\text{fus}}(30, \infty)$ is the minimum.
\end{proposition}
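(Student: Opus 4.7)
The statement is a computational claim about a finite dataset, so the plan is to verify it by direct enumeration rather than by theoretical argument. First I would iterate through the inequivalent character tables in $X_{30}$ extracted from the GAP library, computing $\mathcal{N}(G) = |G|/|\mathrm{Cl}(G)|$ for each table and partitioning the dataset into the buckets $B_{30,n}$. For each table $G$, the key quantities I need are $\mathcal{C}_{\text{gens}}(G)$ and $\mathcal{R}_{\text{gens}}(G)$, obtained via the eigenvalue-distinguishability criteria in Propositions \ref{prop: class algebra generating set} and \ref{prop: fusion algebra generating set}.

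To obtain $\mathcal{C}_{\text{gens}}(G)$ I search over subsets $\{D_1, \dots, D_l\} \subseteq \mathrm{Cl}(G)$, starting at $l=1$ and incrementing, testing whether the list $(|D_i|\chi^R_{D_i}/d_R)_i$ takes distinct values on distinct $R \in \mathrm{Irr}(G)$; the smallest successful $l$ is $N_{\text{cls}}(G)$. I then enumerate all subsets of size $N_{\text{cls}}(G)$ with this distinguishing property and apply formula \eqref{eq: avg gens class size}. The computation of $\mathcal{R}_{\text{gens}}(G)$ is row-column dual: I search over subsets $\{S_1, \dots, S_l\} \subseteq \mathrm{Irr}(G)$, check column-distinguishability via $(\chi^{S_i}_C)_i$, and average $d_R^2$ over minimal generating subsets. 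With these functions tabulated, I average within each bucket $B_{30,n}$ to form $\mathbb{E}(B_{30,n},\mathcal{C}_{\text{gens}})$ and $\mathbb{E}(B_{30,n},\mathcal{R}_{\text{gens}})$, and then, for each $n$ appearing in the dataset, record one of the signs "$>n$", "$<n$", or "$=n$". Sweeping over $l$ produces the running ratios $F_{\text{cls}}(30,l)$ and $F_{\text{fus}}(30,l)$. Since only finitely many $n$ values occur (bounded by $\max_{G \in X_{30}} \mathcal{N}(G)$), the ratios stabilise for sufficiently large $l$, giving the stated limits; to confirm these are the minima, I display the sequence of partial ratios and check monotonicity beyond the last sign change, comparing $198/193$ and $179/212$ against the running values.

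The main obstacle is the combinatorial search for minimal generating subsets at the larger end of the dataset, where a naive enumeration over $\binom{|\mathrm{Cl}(G)|}{l}$ candidates becomes expensive. I would mitigate this by backtracking with early termination: maintain a partition of the target set (either $\mathrm{Irr}(G)$ or $\mathrm{Cl}(G)$) that is progressively refined by the eigenvalue assignments of the basis elements added so far, and prune a partial subset as soon as the remaining candidates cannot refine the partition further into singletons. A subsidiary concern is that the comparisons $\mathbb{E}(B_{30,n}, \mathcal{C}_{\text{gens}}) \gtrless n$ and $\mathbb{E}(B_{30,n}, \mathcal{R}_{\text{gens}}) \gtrless n$ must be performed in exact arithmetic (since the averages are manifestly rationals) to avoid false signs when equality coincidentally holds; this matters both for the strict inequalities $F_{\text{cls}}(30,l)>1$ and $F_{\text{fus}}(30,l)>0.8$, and for correctly identifying the exact limiting fractions.
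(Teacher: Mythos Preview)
Your proposal is correct and matches the paper's approach: the proposition is an empirical summary of a finite computation, and the paper establishes it not by a written proof but by the attached Sage/GAP code together with the figures, exactly along the lines you describe. One small point worth noting is that the pseudocode shown in the paper's appendix returns only the \emph{first} minimal distinguishing subset found, whereas computing $\mathcal{C}_{\text{gens}}(G)$ and $\mathcal{R}_{\text{gens}}(G)$ requires enumerating \emph{all} minimal generating subsets of size $N_{\text{cls}}(G)$ or $N_{\text{fus}}(G)$; you have this right in your plan, and the paper's actual code presumably does as well.
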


\begin{figure}
	\centering
	\begin{subfigure}[t]{0.8\textwidth}
		\caption{}
		\vspace{0pt}
		\includegraphics[width=\textwidth]{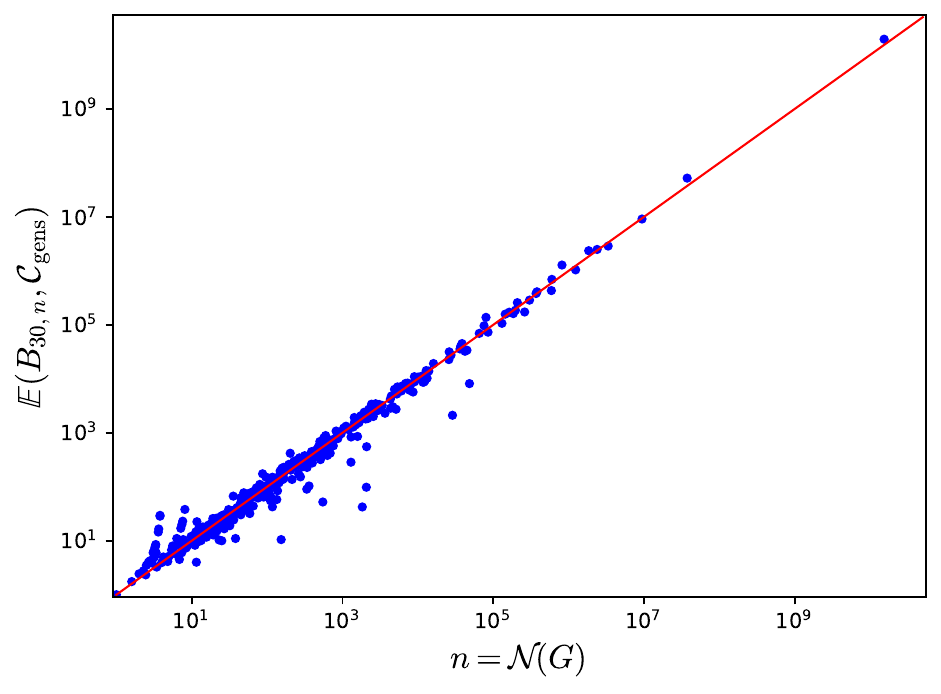}
	\label{fig: avg gen class size}
	\end{subfigure}
	\\
	\begin{subfigure}[t]{0.8\textwidth}
		\caption{}
		\vspace{0pt}
		\includegraphics[width=\textwidth]{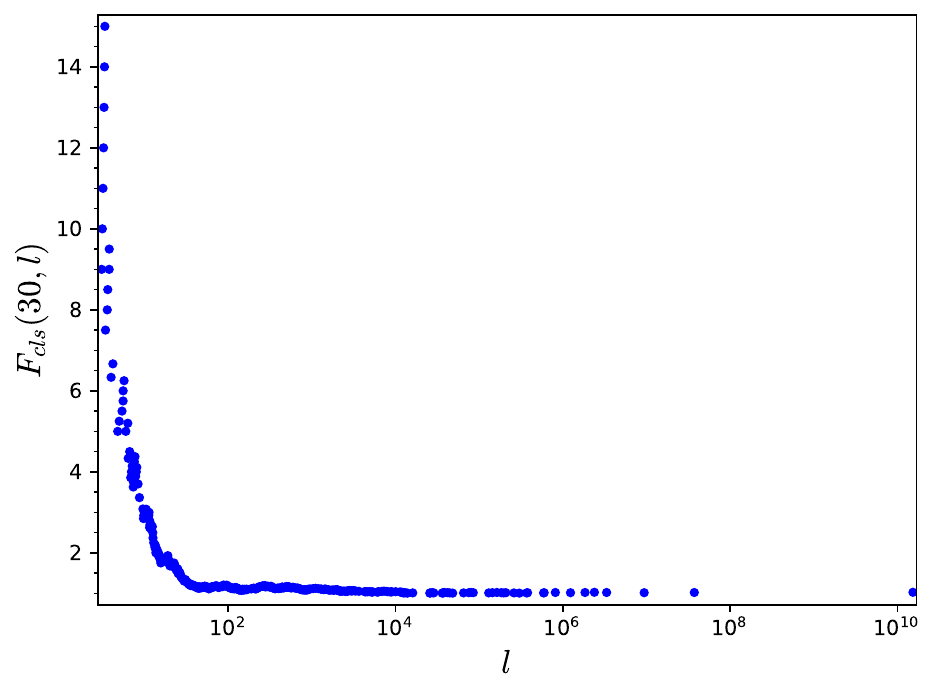}
	\label{fig: avg gen class size ratio}
	\end{subfigure}
	\caption{
		(a)
		The blue points show the average conjugacy class size of generating sets for the class algebra, $\mathbb{E}(E_{30,n}, \mathcal{C}_{\text{gens}})$, as a function of $n=\mathcal{N}(G)$.
		The red line is $y(n)=n$.
		The number of points above the red line is $198$ while the number of points below the red line is $193$. The number of points on the red line is $3$.
		(b)
		This plot displays $F_{\text{cls}}(30,l)$, the cumulative ratio of number of points above and below the line $y(n)=n$ in (a) from $n=1$ to $n=l$.
		}
	\label{fig: class size}
\end{figure}

\begin{figure}
	\centering
	\begin{subfigure}[t]{0.8\textwidth}
		\caption{}
		\vspace{0pt}
		\includegraphics[width=\textwidth]{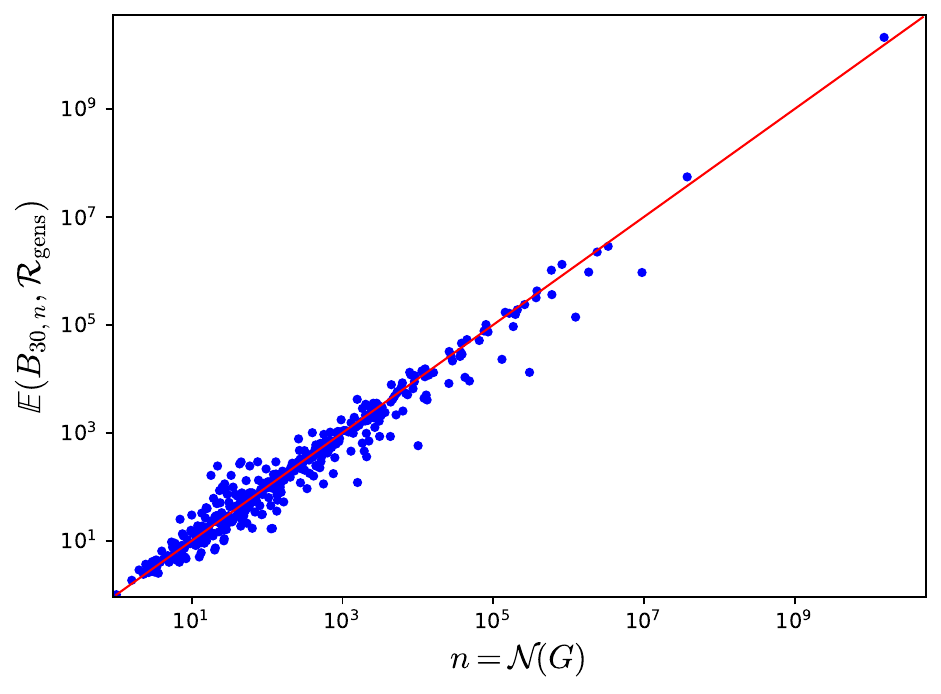}
	\label{fig: avg gen dim}
	\end{subfigure}
	\hfill
	\begin{subfigure}[t]{0.8\textwidth}
		\caption{}
		\vspace{0pt}
		\includegraphics[width=\textwidth]{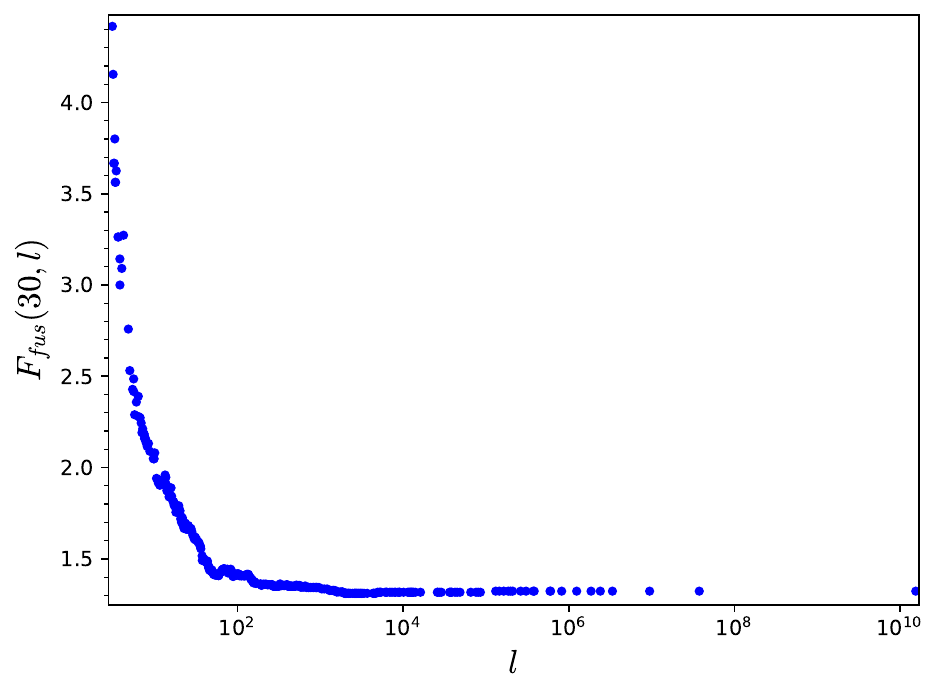}
	\label{fig: avg gen dim ratio}
	\end{subfigure}
	\caption{
		(a)
		The blue points show the average squared dimension of generating sets for the fusion algebra, $\mathbb{E}(E_{30,n}, \mathcal{R}_{\text{gens}})$, as a function of $n=\mathcal{N}(G)$.
		The red line is $y(n)=n$.
		The number of points above the red line is $179$ while the number of points below the red line is $212$ and the number of points on the red line is $3$.
		(b)
		This plot displays $F_{\text{fus}}(30,l)$, the cumulative ratio of number of points above and below the line $y(n)=n$ in (a) from $n=1$ to $n=l$.
	}
	\label{fig: dim squared}
\end{figure}

Based on Proposition \ref{prop: ratio}, we make the following conjectures for the class algebra.
\begin{conjecture}[Average class size (Strong version)]
	\label{conj: class size strong}
	The cumulative ratio $F_{\text{cls}}(N, l)$ is bounded by $1$.
	That is,
	\begin{equation}
		F_{\text{cls}}(N, l)  \geq  1
	\end{equation}
	for all $l$ and $N$.
\end{conjecture}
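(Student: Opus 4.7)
The plan is to show that in the class algebra, large-size conjugacy classes are systematically preferred by minimal generating subsets, so that $\mathcal{C}_{\text{gens}}(G)\geq\mathcal{N}(G)$ holds on average over groups with a given value of $\mathcal{N}(G)$; then lift this to the cumulative statement $F_{\text{cls}}(N,l)\geq 1$.

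The starting point is Proposition \ref{prop: class algebra generating set}: a subset $\{T_{D_1},\dots,T_{D_l}\}$ generates $\mathcal{Z}(\mathbb{C}[G])$ iff the tuples $(|D_j|\chi^R_{D_j}/d_R)_{j=1}^l$ distinguish all $R\in\mathrm{Irr}(G)$. Crucially, the class size $|D_j|$ enters linearly in this eigenvalue, so a column corresponding to a larger class has greater "distinguishing power." I would quantify this by a second-moment calculation using character orthogonality: row orthogonality gives $\sum_C |C|\chi^R_C\overline{\chi}^S_C=|G|\delta_{RS}$, from which one extracts that the typical magnitude of the central character $|C|\chi^R_C/d_R$ grows like $\sqrt{|C|\mathcal{N}(G)}/d_R$. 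Pairs $(R,S)$ are thus separated by column $C$ with a typical gap that increases with $|C|$.

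Next, I would translate this distinguishing bias into a statement about which columns appear in minimal generating subsets. A clean route is a probabilistic matroid-style argument: show that, among subsets $F\subseteq\mathrm{Cl}(G)$ of minimum size $N_{\text{cls}}(G)$ with the generating property, the probability of a class $C$ belonging to $F$ is a nondecreasing function of $|C|$ up to controlled corrections. Combined with the identity $\sum_C|C|=|G|$ and the definition $\mathcal{N}(G)=|G|/|\mathrm{Cl}(G)|$, a size-biased inclusion probability directly yields $\mathcal{C}_{\text{gens}}(G)\geq\mathcal{N}(G)$ pointwise. For families where the character table is explicit (abelian groups where $\mathcal{N}(G)=1$ makes the inequality trivial, dihedral and dicyclic groups, symmetric groups via the partial sum bounds of \cite{KempRam}), this pointwise statement should be directly verifiable and would seed the induction on $n$.

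The main obstacle will be bridging from a per-group inequality to the cumulative ratio $F_{\text{cls}}(N,l)$ over the ill-understood ensemble of all finite groups with $\mathcal{N}(G)=n$. Two complementary tactics should help: (i) isolate the groups where $\mathcal{C}_{\text{gens}}(G)<\mathcal{N}(G)$ and bound their arithmetic density, exploiting that small-class exceptions tend to cluster at values of $n$ where only a few groups exist (so the sign of $\mathbb{E}(B_{N,n},\mathcal{C}_{\text{gens}})-n$ is driven by one or two tables); and (ii) observe that $F_{\text{cls}}(N,l)\geq 1$ is a statement about counts of $n$-values rather than magnitudes, so even a pointwise inequality with explicit error term that decays as $n\to\infty$ would suffice, leaving only finitely many potential exceptional $n$ to verify computationally. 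The delicate step is controlling ties, i.e.\ the values of $n$ for which $\mathbb{E}(B_{N,n},\mathcal{C}_{\text{gens}})=n$ exactly, since these do not contribute to either the numerator or denominator of \eqref{def: eq F class} and their distribution is not constrained by the moment bounds above.
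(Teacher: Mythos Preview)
The statement you are attempting to prove is explicitly a \emph{conjecture} in the paper: there is no proof to compare against. The paper's only support for it is the empirical observation in Proposition~\ref{prop: ratio}, namely that $F_{\text{cls}}(30,l)>1$ for all $l$ in the dataset drawn from the GAP library. The authors do not claim a proof, and in the conclusions describe proving it as ``a fascinating challenge.''

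Your proposal has a structural gap that would cause it to fail even as a proof strategy. The core of your argument is to establish $\mathcal{C}_{\text{gens}}(G)\geq\mathcal{N}(G)$ either pointwise or on average over each bin $B_{N,n}$. But the paper's own data already rules this out: Figure~\ref{fig: avg gen class size} records $193$ values of $n$ with $\mathbb{E}(B_{30,n},\mathcal{C}_{\text{gens}})<n$ against $198$ with the opposite inequality. The conjecture is therefore not that the bin-averaged inequality holds, but only that the \emph{count} of $n$-values above the line $y=n$ weakly exceeds the count below, for every truncation $l$ and every $N$. This is a far more delicate statistical balance, and nothing in your second-moment heuristic or your matroid sketch gives control over which side of the line a given $n$ falls on, let alone over the running ratio of the two counts.

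Your fallback in the final paragraph---bound the density of exceptional $n$ and check finitely many by hand---does not rescue the argument. The conjecture is uniform in $N$, and for each $N$ the set of possible $n=\mathcal{N}(G)$ is a finite but $N$-dependent set of rationals; an asymptotic error term as $n\to\infty$ says nothing about the many small-$n$ bins where the data shows the sign is genuinely mixed. Any proof would need a mechanism that directly compares the two counts cumulatively, and your proposal does not supply one.
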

For some values of $N$ and $l$, the denominator in equation \eqref{def: eq F class} is zero.
We interpret this as $F_{\text{cls}}(N, l) > 1$.
One can also consider a weaker but also interesting version of the above conjecture where we take $N$ to be large.
We split the conjecture into two parts.
\begin{conjecture}[Average class size (Weak version (a))]
	As we include very large character tables, the cumulative ratio $F_{\text{cls}}(N, l)$ converges. 
	That is,
	\begin{equation}
		\lim_{N \rightarrow \infty} F_{\text{cls}}(N, l) 
	\end{equation}
	exists for every $l$.
\end{conjecture}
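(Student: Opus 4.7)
The plan is to reduce existence of $\lim_{N \to \infty} F_{\text{cls}}(N, l)$ to the stabilization, as $N \to \infty$, of (i) the set $\mathcal{A}(N, l) = \{n \leq l : B_{N,n} \neq \emptyset\}$, and (ii) the sign of $\mathbb{E}(B_{N,n}, \mathcal{C}_{\text{gens}}) - n$ for each $n \in \mathcal{A}(N, l)$. Since $F_{\text{cls}}(N, l)$ is a ratio of counts entirely determined by these signs, stabilization of both conditions yields the conjectured limit. The first reduction I would carry out is to argue that $\mathcal{A}(N, l)$ is eventually constant: since $\mathcal{N}(G) = |G|/|\mathrm{Cl}(G)|$ is a positive rational at least $1$ with $|\mathrm{Cl}(G)|$ dividing $|G|$, the sets $\mathcal{A}(N, l)$ grow monotonically in $N$, so one only needs to show that finitely many rational $n \leq l$ are ever attained as $\mathcal{N}(G)$, or alternatively that the tail of late-appearing $n$ contributes in a controlled way.

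Next I would analyze $\mathbb{E}(B_{N,n}, \mathcal{C}_{\text{gens}})$ asymptotically for each fixed $n$. A clean starting case is $n = 1$: then $B_{N,1}$ consists entirely of abelian groups, so every conjugacy class has size $1$ and $\mathcal{C}_{\text{gens}}(G) = 1 = n$ identically, producing a point exactly on the line $y = n$ for every $N$ and contributing to neither the numerator nor the denominator of $F_{\text{cls}}$. For more general $n$, I would try to combine structural descriptions of groups satisfying $|G| = n |\mathrm{Cl}(G)|$ with bounds on $\mathcal{C}_{\text{gens}}(G)$ derived from the character-theoretic criterion in Proposition \ref{prop: class algebra generating set}. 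The aim is to show either that $N \mapsto \mathbb{E}(B_{N,n}, \mathcal{C}_{\text{gens}})$ converges to a limit different from $n$, or that the sign of the difference is eventually constant even without full convergence.

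The hard part will be controlling $\mathcal{C}_{\text{gens}}(G)$ uniformly over the infinite family of groups sharing a fixed value $\mathcal{N}(G) = n$. Both $\brankC(G)$ and the sizes of classes appearing in its minimal generating subsets depend delicately on the entire character table of $G$ and do not admit obvious expressions in terms of coarser invariants; contributions from large families such as simple groups or extraspecial $p$-groups may prevent any uniform \emph{a priori} bound. A realistic intermediate target is therefore to establish the limit only for those $n \leq l$ corresponding to families admitting a structural description (abelian, dihedral, generalized quaternion, $p$-groups of bounded coclass) and then to argue that the remaining $n \leq l$ form a set whose sign pattern is eventually constant in aggregate. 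Failing that, proving existence of $\liminf$ and $\limsup$ with an explicit gap estimate would already constitute meaningful partial progress toward the full conjecture.
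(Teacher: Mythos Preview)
This statement is labelled a \emph{Conjecture} in the paper and is not accompanied by any proof; the paper offers it purely as an extrapolation from the empirical data summarised in Proposition~\ref{prop: ratio} and Figure~\ref{fig: class size}. There is therefore no ``paper's own proof'' against which to compare your proposal.

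As for the proposal itself, it is a strategy outline rather than a proof, and you rightly flag the hard steps as open. A few concrete issues are worth noting. First, the claim that $|\mathrm{Cl}(G)|$ divides $|G|$ is false in general (e.g.\ $S_4$ has $24$ elements and $5$ conjugacy classes), so $\mathcal{N}(G)$ need not be an integer; this does not wreck the outline, but that parenthetical reasoning should be dropped. Second, and more seriously, neither of your two stabilisation reductions is known to hold: the set of rationals $n\le l$ realised as $\mathcal{N}(G)$ for some finite group may well be infinite (so $\mathcal{A}(N,l)$ need not eventually stabilise), and for each fixed $n$ the family $B_{N,n}$ typically grows without bound as $N\to\infty$ (already $n=1$ contains all abelian character tables), so sign-stabilisation of $\mathbb{E}(B_{N,n},\mathcal{C}_{\text{gens}})-n$ would require genuinely new structural input. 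These are not oversights on your part---you identify them as the hard part---but they are precisely the content of the conjecture, so the proposal does not yet reduce the problem to anything more tractable than the original statement.
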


If the above conjecture is true, we can also consider the following conjecture.
\begin{conjecture}[Average class size (Weak version (b))]
	As we include very large character tables, the cumulative ratio $F_{\text{cls}}(N, l)$ is bounded by $1$.
	That is,
	\begin{equation}
		\lim_{N \rightarrow \infty} F_{\text{cls}}(N, l)  \geq  1
	\end{equation}
	for every $l$.
\end{conjecture}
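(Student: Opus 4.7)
The plan is to reduce the conjecture to a per-group statement and then invoke character-theoretic arguments about which conjugacy classes appear in minimal generating subsets. It suffices to show that, for each group $G$, the comparison $\mathcal{C}_{\text{gens}}(G) \geq \mathcal{N}(G)$ either holds outright or fails only on a subset of groups whose density inside the bins $B_{N,n}$ vanishes as $N \to \infty$. If one could prove $\mathbb{E}(B_{N,n}, \mathcal{C}_{\text{gens}}) \geq n$ for all sufficiently large $N$ and all $n \leq l$, then the bound $F_{\text{cls}}(N,l) \geq 1$ would follow directly from the definition in \eqref{def: eq F class}, and the conjecture would hold even before passing to the limit.

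The main idea for the per-group step is to use Proposition \ref{prop: class algebra generating set}: a class $C$ contributes to distinguishing irreducibles via $\widehat{\chi}^R(T_C) = |C|\chi^R_C/d_R$, whose prefactor $|C|$ rescales the range of eigenvalues across $\mathrm{Irr}(G)$. I would try to formalise a ``large classes are good generators'' principle through an exchange lemma: given any minimal generating subset $F$ containing a small class $C_{\text{small}}$, show that with controllable probability one may replace $C_{\text{small}}$ by some large class $C_{\text{large}}$ while preserving the generating property. Averaging over all minimal subsets $F_1,\ldots,F_k$ in \eqref{eq: avg gens class size} then biases the sum toward large $|C|$ and produces $\mathcal{C}_{\text{gens}}(G) \geq \mathcal{N}(G)$.

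A complementary ingredient would fix the endpoint: the identity class (and, more generally, classes of central elements) has $|C|=1$ and contributes an eigenvalue list of the form $\chi^R(z)/d_R$, which by Schur's lemma is a root of unity and therefore carries limited discriminating information. A quantitative version of this observation, bounding the number of minimal generating subsets in which a class of a given size can appear, would translate into a weighted-average inequality of exactly the shape needed.

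The main obstacle is that discriminating power is not monotone in class size: a modestly-sized class can happen to split a pair of irreducibles that every larger class conflates, making it indispensable to every minimal generating subset of $G$. Any exchange argument must therefore be probabilistic or rely on genericity properties of the character table. A plausible path around this is to identify an explicit family of groups (for instance, finite simple groups of large rank or random groups in a natural model) whose minimal generating subsets admit a combinatorial description, verify $\mathcal{C}_{\text{gens}}(G) \geq \mathcal{N}(G)$ within that family, and prove that such groups asymptotically dominate the bins $B_{N,n}$ as $N \to \infty$. Establishing this last density statement, using the structure of the GAP character table library and counts of groups by order and isomorphism type, is the step I expect to be the principal obstacle.
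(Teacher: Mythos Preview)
The statement you are attempting to prove is labelled in the paper as a \emph{conjecture}, and the paper offers no proof of it. The authors present it as an extrapolation from the empirical data in Proposition~\ref{prop: ratio} and Figure~\ref{fig: class size}; establishing it is listed among the open problems in the conclusions. There is therefore no proof in the paper to compare your proposal against.

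Your proposal is, by your own account, not a proof but a research programme with several gaps you explicitly flag. Two points are worth making nonetheless. First, the sets $X_N$ and $B_{N,n}$ in the paper are defined over \emph{all} inequivalent character tables of size at most $N$; the GAP library enters only as the source of the empirical sample. Any argument that relies on ``the structure of the GAP character table library and counts of groups by order and isomorphism type'' is addressing the wrong object. Second, your proposed exchange lemma (``large classes are good generators'') is precisely the heuristic the authors are testing, not an established fact; you correctly note that discriminating power is not monotone in $|C|$, which is exactly why the statement is a conjecture and not a theorem. The per-group inequality $\mathcal{C}_{\text{gens}}(G) \geq \mathcal{N}(G)$ that you hope to establish is known to fail for some $G$ (otherwise $F_{\text{cls}}(30,l)$ would be infinite rather than close to $1$), so the density argument you defer to the end is not a technicality but the entire content of the problem.
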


In the case of fusion algebras, Proposition \ref{prop: ratio} and Figure \ref{fig: dim squared} do not give compelling evidence that the cumulative ratio $F_{\text{fus}}(N,l)$ converges to a nice value.
But it would be interesting to know if any lower bounds on $F_{\text{fus}}(N,l)$ exist.
We pose the following open question.
\begin{open}
	\hfill
	\begin{enumerate}
		\item Does there exist a positive number $F^{*}_{\text{fus}}$ such that
		\begin{equation}
			F_{\text{fus}}(N,l) \geq F^{*}_{\text{fus}}
		\end{equation}
		for all $l$ and $N$?
		\item Does the limit
		\begin{equation}
			\lim_{N \rightarrow \infty} F_{\text{fus}}(N, l) 
		\end{equation}
		exists for every $l$?
		\item If the answer to the above question is yes, does there also exist a positive number $F^{**}_{\text{fus}}$ such that
		\begin{equation}
			\lim_{N \rightarrow \infty} F_{\text{fus}}(N, l) \geq F^{**}_{\text{fus}}
		\end{equation}
		for all $l$?
	\end{enumerate}
\end{open}

Based on the data presented in this subsection, the average class size of generating sets of the class algebra has interesting properties.
In particular, it seems that the cumulative ratio $F_{\text{cls}}(N,l)$ is bounded by one, and approaches one in the limit where many character tables are included.

\section{Conclusions}

In this paper, we introduced a collection of  mathematical features of finite group  character tables inspired by measures of quantum and classical complexity in algorithms \cite{JosephSRProj}  related to AdS/CFT and topological quantum field theory.
Algebraically, the mathematical features amount to studying properties of minimal generating subsets of the class algebra and the fusion algebra of a finite group $G$.
The features are related by row-column duality in the sense of \cite{RowCol}: the first set involves the class algebra and conjugacy classes; the second set involves the fusion algebra and irreducible representations.

We then studied regularities that appear when taking statistical averages of the features over finite groups.
The class algebra circle generator complexity in equation \eqref{eq: complexity boundary class} and the fusion algebra  circle generator complexity in equation \eqref{eq: complexity boundary fusion} were compared in Figure \ref{fig: no gens boundary}.
We observed that the class algebra circle generator complexity tends to be smaller than the fusion algebra circle generator complexity,
\begin{equation*}
	\mathbb{E}(K_n, \brankC) < \mathbb{E}(K_n, \brankF) 
\end{equation*}
as formalized in Proposition \ref{prop: boundary complexity}.
We conjecture that this observation is robust in the sense of Conjectures \ref{conj: gen strong} and \ref{conj: gen weak}.

The average conjugacy class size over generating subsets and the average squared dimension over generating subsets were compared in Figures \ref{fig: class size} and \ref{fig: dim squared}.
The first part of Proposition \ref{prop: ratio} states that
\begin{equation*}
		F_{\text{cls}}(30,l) > 1, \quad \forall l  \,.
\end{equation*}
This formalised the observation that the average class size of minimal generating sets of the class algebra tends to be larger than the total average class size of finite groups, for groups with up to $N=30$ conjugacy classes.
In Conjecture \ref{conj: class size strong}, we proposed that this is true for general $N$. 
The second part of Proposition \ref{prop: ratio} states that 
\begin{equation*}
	F_{\text{fus}}(30, l) > 0.8 , \quad \forall l \, . 
\end{equation*}
That is, the average squared dimension of minimal generating sets of the fusion algebra tends to be smaller than the total average squared dimension, for groups with number of irreducible representations up to $N=30$.

This work opens up a number of interesting avenues for future research.
Finding ways to prove  the proposed conjectures is evidently a fascinating challenge, since these conjectures involve statistical averages involving groups of arbitrarily large size.
Equally, finding counter-examples or further computational evidence can guide theoretical approaches to proving or reformulating the conjectures. The estimates of the complexity of projector detection tasks for centres of symmetric group algebras  in \cite{JosephSRProj}, using quantum phase estimation, employed generator complexities of the kind we studied in detail here, along with estimates of the range of eigenvalues of the elements of the generators in the minimal generating sets. The analogous calculations for general groups, combining the generator complexities with eigenvalue ranges to be determined, is an interesting avenue for the future. 
This  would progress the present work closer to complexities of  quantum algorithms which can be physically implemented.
While the symmetric group projector detection task was motivated, through mechanisms involving Schur-Weyl duality to AdS/CFT, finding a holographic interpretation for the case of general groups, either in AdS/CFT or mathematical models of gauge-string duality based on matrix models,  is an interesting challenge. We hope to address some of  these challenges  using a combination of methods including a data scientific point of view, with the use of machine learning techniques, in the near future.

\section*{Acknowledgements}
This work was initiated as a collaboration in the context of a Royal Society International Exchanges Award, IES$\setminus$R2$\setminus$222073,  held by SR and RKS.
SR is supported by the Science and Technology Facilities Council
(STFC) Consolidated Grant ST/T000686/1 
``Amplitudes, strings and duality''. SR acknowledges a  Visiting Professorship at the Dublin Institute for Advanced Studies held during the progress of this work. SR also gratefully acknowledges a visit to the Perimeter Institute: this research was supported in part by Perimeter Institute for Theoretical Physics. Research at Perimeter Institute is supported by the Government of Canada through the Department of Innovation, Science, and Economic Development, and by the Province of Ontario through the Ministry of Colleges and Universities.
The work of AP was partly funded by the Deutsche Forschungsgemeinschaft (DFG) grant SFB 1283/2 2021 E317210226.
RKS was supported by a Basic Research Grant of the National
Research Foundation of Korea (NRF2022R1F1A1073128) during the course of this project.
He is also supported by an Outstanding Young Scientist Grant of the National
Research Foundation of Korea,
and partly supported by the BK21 Program
(`Next Generation Education Program for Mathematical Sciences’, 4299990414089)
funded by the Ministry of Education in Korea and the National Research Foundation
of Korea. We  thank Joseph Ben Geloun and Rajath Radhakrishnan for discussions related to this work.

\section{Appendix}

\subsection{Algorithms}

The minimum number of columns (each corresponding to a conjugacy class of a group $G$)
needed to distinguish all rows (each representing an irreducible representation of $G$)
in the character table 
is precisely the circle generator complexity of the class algebra of $G$.
This complexity is 
denoted by
$N_{\text{cls}}(G) \equiv N(\mathcal{Z}(\mathbb{C}[G]))$
in Sections \ref{sec: review tqft} and \ref{sec: class complexity}.
Table \ref{tab:alg01} summarizes the algorithm used to identify this
minimal set of columns in the character table. 

\begin{table}[H]
\centering
\resizebox{0.8\hsize}{!}{
\begin{tabular}{l}
\hline
\textbf{Algorithm for \texttt{minimum\_columns\_to\_distinguish\_rows}} \\
\hline
\\[-6pt]
\textbf{FUNCTION} \texttt{minimum\_columns\_to\_distinguish\_rows}(\textit{matrix}): \\
\quad \textbf{let} $n \leftarrow$ \textit{number of columns in matrix} \\
\quad \textbf{let} \textit{rows} $\leftarrow \textit{matrix}$ \\
\\[-6pt]
\quad \textbf{for} $k$ \textbf{from} 1 \textbf{to} $n$: \\
\quad\quad \textbf{for each} \textit{column\_indices} \textbf{in} \textbf{COMBINATIONS\_OF}($k$, $\{0 \ldots n-1\}$): \\
\quad\quad\quad \textbf{let} \textit{projections} $\leftarrow$ \textit{empty list} \\
\quad\quad\quad \textbf{for each} \textit{row} \textbf{in} \textit{rows}: \\
\quad\quad\quad\quad \textbf{let} \textit{projection} $\leftarrow$ \textit{empty list} \\
\quad\quad\quad\quad \textbf{for each} \textit{column\_index} \textbf{in} \textit{column\_indices}: \\
\quad\quad\quad\quad\quad \textbf{append} \textit{row}[\textit{column\_index}] \textbf{to} \textit{projection} \\
\quad\quad\quad\quad \textbf{append} \textbf{TUPLE}(\textit{projection}) \textbf{to} \textit{projections} \\
\\[-6pt]
\quad\quad\quad \textbf{if} \textbf{SIZE}(\textbf{UNIQUE\_SET}(\textit{projections})) = $n$: \\
\quad\quad\quad\quad \textbf{let} \textit{columns\_needed} $\leftarrow$ [ \textbf{TRANSPOSE}(\textit{matrix})[c] \\
\quad\quad\quad\quad\quad\quad\quad\quad\quad\quad\textbf{for} c \textbf{in} \textit{column\_indices} ] \\
\quad\quad\quad\quad \textbf{return} $(k,\ \textit{column\_indices},\ \textit{columns\_needed})$ \\
\\[-6pt]
\quad \textbf{return} $(n,\ \{0, 1, \ldots, n-1\},\ \textbf{TRANSPOSE}(\textit{matrix}))$ \\
\\[-6pt]
\hline
\end{tabular}
}
\caption{Algorithm for distinguishing rows with a minimal set of columns in a $n\times n$ matrix.}
\label{tab:alg01}
\end{table}

Similarly, we note that the minimum number of rows (each corresponding to an irreducible representation of a group $G$)
needed to distinguish all columns (each representing a conjugacy class of $G$)
in the character table 
is precisely the circle generator complexity of the fusion algebra of $G$.
This complexity is 
denoted by
$N_{\text{fus}}(G) \equiv N(R(G))$
in Sections \ref{sec: review tqft} and \ref{sec: class complexity}.
Table \ref{tab:alg02} summarizes the algorithm used to identify this
minimal set of rows in the character table. 

\begin{table}[H]
\centering
\resizebox{0.8\hsize}{!}{
\begin{tabular}{l}
\hline
\textbf{Algorithm for \texttt{minimum\_rows\_to\_distinguish\_columns}} \\
\hline
\\[-6pt]
\textbf{FUNCTION} \texttt{minimum\_rows\_to\_distinguish\_columns}(\textit{matrix}): \\
\quad \textbf{let} $n \leftarrow$ \textit{number of rows in matrix} \\
\quad \textbf{let} \textit{columns} $\leftarrow$ \textbf{TRANSPOSE}(\textit{matrix}) \\
\\[-6pt]
\quad \textbf{for} $k$ \textbf{from} 1 \textbf{to} $n$: \\
\quad\quad \textbf{for each} \textit{row\_indices} \textbf{in} \textbf{COMBINATIONS\_OF}($k$, $\{0 \ldots n-1\}$): \\
\quad\quad\quad \textbf{let} \textit{projections} $\leftarrow$ \textit{empty list} \\
\quad\quad\quad \textbf{for each} \textit{column} \textbf{in} \textit{columns}: \\
\quad\quad\quad\quad \textbf{let} \textit{projection} $\leftarrow$ \textit{empty list} \\
\quad\quad\quad\quad \textbf{for each} \textit{row\_index} \textbf{in} \textit{row\_indices}: \\
\quad\quad\quad\quad\quad \textbf{append} \textit{column}[\textit{row\_index}] \textbf{to} \textit{projection} \\
\quad\quad\quad\quad \textbf{append} \textit{projection} \textbf{to} \textit{projections} \\
\\[-6pt]
\quad\quad\quad \textbf{if} \textbf{SIZE}(\textbf{UNIQUE\_SET}(\textit{projections})) = \textbf{SIZE}(\textit{columns}): \\
\quad\quad\quad\quad \textbf{let} \textit{rows\_needed} $\leftarrow$ [\textit{matrix}[\textit{r}] \textbf{for} \textit{r} \textbf{in} \textit{row\_indices}] \\
\quad\quad\quad\quad \textbf{return} $(k, \textit{row\_indices}, \textit{rows\_needed})$ \\
\\[-6pt]
\quad \textbf{return} $(n,\ \{0, 1, \ldots, n-1\},\ \textit{matrix})$ \\
\\[-6pt]
\hline
\end{tabular}
}
\caption{Algorithm for distinguishing columns with a minimal set of rows in a $n\times n$ matrix.}
\label{tab:alg02}
\end{table}

\subsection{Proof of Theorem \ref{thm: generators}}
\label{apx: generators}
In this appendix, we prove Theorem \ref{thm: generators}, using a generalisation of the arguments in Section 3.4 of \cite{KempRam} (see also \cite[Section 3.1]{SRES}).

We start by studying the space  generated by a single element $t_1 \in A$.
Using 
\begin{equation}
	t_1 = \sum_{a=1}^K \widehat{\chi}^a(t_1) P_a\, ,
\end{equation}
and $P_a P_b = \delta_{ab} P_a$, we have
\begin{equation}
	t_1^n = \sum_{a} [\widehat{\chi}^a(t_1)]^n P_a \, .\label{eq: t1 to n}
\end{equation}
We split the sum over $a$ into level sets of $\widehat{\chi}^a(t_1)$.

For this, it is useful to consider the set of all distinct values taken by the character $\widehat{\chi}^a(t_1)$ as we vary $a$,
\begin{equation}
	\mathcal{S}(t_1) = \{\widehat{\chi}^a(t_1) \}_{a=1}^K \, .
\end{equation}
For a fixed value $\xi_1 \in \mathcal{S}(t_1)$, we define the subset $O(\xi_1)$ of $\{1,\dots,K\}$ for which the character agrees with $\xi_1$,
\begin{equation}
	O(\xi_1) = \{ b \in \{1,\dots,K\} \, : \, \widehat{\chi}^b(t_1) = \xi_1 \} \, .
\end{equation}
Note that these subsets $O(\xi_1)$ partition $\{1,\dots,K\}$ into disjoint subsets such that
\begin{equation}
	\{ 1 ,\dots, K \} = \bigcup_{ \xi_1 \in \mathcal{S}(t_1)} O(\xi_1) \, .
\end{equation}

We re-write equation \eqref{eq: t1 to n} as follows
\begin{equation}
	t_1^n = \sum_{ \xi_1 \in \mathcal{S}(t_1) } \xi_1^n \sum_{b \in O(\xi_1)} P_b \, , \label{eq: t1 to n 2}
\end{equation}
and define
\begin{equation}
	P(\xi_1) = \sum_{b \in O(\xi_1)} P_b, \quad V_{\xi_1 n} = \xi_1^n \, .
\end{equation}
Then equation \eqref{eq: t1 to n 2} can be understood as a matrix equation as follows
\begin{equation}
	t_1^n = \sum_{ \xi_1 \in \mathcal{S}(t_1)} V_{\xi_1 n } P(\xi_1) \, .
\end{equation}
By construction, $V$ is a Vandermonde matrix with distinct entries.
We restrict $n$ to be in the set $\{0,\dots, |\mathcal{S}(t_1)| - 1 \}$ making $V$ invertible.
Let $W$ be the inverse of $V$, then we have
\begin{equation}
	\sum_{b \in O(\xi_1)} P_b	 = P(\xi_1) = \sum_{n=0}^{|\mathcal{S}(t_1)|-1} t_1^n W_{n \xi_1} \, . \label{eq: pb vandermonde inverse}
\end{equation}

The right-hand side can be written explicitly, as we now explain.
For this, we introduce the elementary symmetric functions of a set of numbers $\{x_a\}_{a=1}^m$, denoted $e_i(\{x_a\}_{a=1}^m)$.
They appear in the expansion of the product
\begin{equation}
	\prod_{a=1}^m (\lambda - x_a) = \sum_{i=0}^{m} \lambda^{i}(-1)^{m-i} e_{m-i}(\{x_a\}_{a=1}^m) \, . \label{eq: Vieta}
\end{equation}
In terms of elementary symmetric functions, we have (see Section 1.2.3 exercise 40 of \cite{TAOCP1})
\begin{equation}
	W_{n \xi_1} = (-1)^{|\mathcal{S}(t_1)|-1-n} \frac{e_{|\mathcal{S}(t_1)|-1-n}(\mathcal{S}(t_1)\backslash \{\xi_1\})}{\prod_{\xi \in \mathcal{S}(t_1)\backslash \{\xi_1\} }(\xi_1 - \xi)} \, .
\end{equation}
Plugging this into \eqref{eq: pb vandermonde inverse} gives
\begin{equation}
	P(\xi_1) = \sum_{n=0}^{|\mathcal{S}(t_1)|-1} t_1^n (-1)^{|\mathcal{S}(t_1)|-1-n}\frac{e_{|\mathcal{S}(t_1)|-1-n}(\mathcal{S}(t_1)\backslash \{\xi_1\})}{\prod_{\xi \in \mathcal{S}(t_1)\backslash \{\xi_1\} }(\xi_1 - \xi)} \, .
\end{equation}
Using equation \eqref{eq: Vieta} again, we get
\begin{equation}
	P(\xi_1) = \prod_{\xi \in \mathcal{S}(t_1)\backslash \{\xi_1\}} \frac{(t_1 - \xi)}{(\xi_1 - \xi)} \, .
\end{equation}

In particular, the space generated by powers of $t_1$ has a basis
\begin{equation}
	\Span( 1, t_1, t_1^2, \dots, ) = \Span( P(\xi_1) \, : \, \xi_1 \in \mathcal{S}(t_1) ) \, .
\end{equation}
Before we move on to the case of two generators, note that if the value of $\widehat{\chi}^a(t_1)$ distinguishes all $a$, each level set $O(\xi_1)=\{b\}$ contains a single $b$.
It follows that
\begin{equation}
	P(\xi_1) = P_b \, .
\end{equation}
In other words, $t_1$ generates the full algebra $A$.

Now consider two elements $t_1, t_2$ and the subalgebras they generate separately.
The first element generates a space with basis elements
\begin{equation}
	P(\xi_1), \quad \xi_1 \in \mathcal{S}(t_1) \, .
\end{equation}
The second element generates a space with basis elements
\begin{equation}
	P(\xi_2), \quad \xi_2 \in \mathcal{S}(t_2) \, .
\end{equation}
The product of such basis elements can be written in terms of intersections of level sets and is given by
\begin{equation}
	P(\xi_1) P(\xi_2) = \sum_{\substack{a \in O(\xi_1) \\ b \in O(\xi_2) }} P_a P_b = \sum_{\substack{c \in O(\xi_1) \cap O(\xi_2) }} P_c \, .
\end{equation}
The last equality follows from $P_a P_b = \delta_{ab} P_b$ and the sum is over all $c \in \{1, \dots, K\}$ such that
\begin{equation}
	( \widehat{\chi}^c(t_1), \widehat{\chi}^c(t_2) ) = ( \xi_1 , \xi_2) \, .
\end{equation}
Now, if the list $(\widehat{\chi}^a(t_1), \widehat{\chi}^a(t_2) )$ distinguishes all $a$, then the intersection $O(\xi_1) \cap O(\xi_2) = \{b\} $ contains a single element and therefore
\begin{equation}
	P(\xi_1) P(\xi_2) =  P_b
\end{equation}
and $t_1,t_2$ generate the full algebra.

The generalisation to $l$ generators is straight-forward. Every $P_b$ has an expression as a product of the form
\begin{equation}
	P_b =  \prod_{i=1}^l P(\xi_i) \, ,
\end{equation}
where $\xi_i \in \mathcal{S}(t_i)$ are the character values that uniquely determine $b$.
That is,
\begin{equation}
	\bigcap_{i=1}^l O(\xi_i) = \{b\} \, .
\end{equation}

\printbibliography
\end{document}